\begin{document}
\title{Fairly Allocating Goods in Parallel}
\author[a]{Rohan Garg}
\author[a]{Alexandros Psomas}
\affil[a]{Department of Computer Science, Purdue University.

Email: \texttt{{\{rohang,apsomas\}@cs.purdue.edu}}}
\date{} 

\maketitle

\begin{abstract}
We initiate the study of parallel algorithms for fairly allocating indivisible goods among agents with additive preferences. We give fast parallel algorithms for various fundamental problems, such as finding a Pareto Optimal and EF1 allocation under restricted additive valuations, finding an EF1 allocation for up to three agents, and finding an envy-free allocation with subsidies. On the flip side, we show that fast parallel algorithms are unlikely to exist (formally, \textit{CC}-hard) for the problem of computing Round-Robin EF1 allocations.
\end{abstract}

\section{Introduction}

The last two decades have witnessed a remarkable improvement in our computational power, largely due to the widespread adoption of parallel computing. Parallel computing has become the dominant paradigm in computer architecture, mainly in the form of multi-core processors. At the same time, in the overwhelming majority of the AI literature, ``efficient algorithm'' is a synonym for ``efficient sequential algorithm.'' In this paper, we initiate the study of parallel algorithms for a fundamental problem in fair division: allocating a set of $m$ indivisible items to $n$ agents with additive preferences. 

Some classical algorithms for this problem proceed in rounds, e.g. the Round Robin procedure or the Envy-Cycle Elimination procedure~\cite{LiptonMossel04indivisibleGoods} that achieve envy-freeness up to one item (henceforth, EF1), while others are computationally intractable (NP-hard), e.g. the maximum Nash welfare (MNW) solution that achieves Pareto efficiency (henceforth, PO) and EF1.  Our goal in this paper is to design, for various, fundamental fair division tasks, algorithms that run in polylogarithmic time and use a polynomial number of processors, or to prove that no such algorithm is likely to exist.

\subsection{Our contribution}
As a warm-up for the reader unfamiliar with the capabilities of parallel algorithms, in Section~\ref{sec:verification} we consider the basic problem of whether, given an allocation, various fairness properties can be quickly verified in parallel. We show that envy-freeness (EF), envy-freeness up-to-one item (EF1), and envy-freeness up-to-any item (EFX) can all be checked efficiently in parallel, i.e. we give \textit{NC} algorithms for verifying these properties. In Section~\ref{sec:ef1 for two and three}, we show how to use algorithms with logarithmic query complexity~\cite{OhArielWarut2019ManyGoodsFewQueries} to get fast parallel algorithms for computing EF1 allocations for two and three agents, as well as how to compute EF1 and fractionally PO allocations for two agents, by mimicking the adjusted winner process.

In Sections~\ref{sec: hardness} and \ref{sec: matching-revised}, we study the complexity of allocating items to \textit{restricted additive agents}, that is when the value of agent $i$ for item $j$ is either $0$ or $v_j$ (i.e. each item has an \textit{inherent} value $v_j$ and agent $i$ either sees this value or not), and the value of agent $i$ for a subset of items $S$ is simply $\sum_{j \in S} v_{i,j}$. We first explore the complexity of finding an EF1 allocation. Arguably, the simplest EF1 algorithm in this setting is the Round-Robin procedure (agents choose items one at a time, following a fixed order). In \Cref{sec: hardness}, we show that, for a given order $\sigma$ over the agents, one cannot ``shortcut'' the execution of Round-Robin: the problem is \textit{CC}-hard. Surprisingly, this holds even for the case when each agent positively values at most $3$ items and each item is positively valued by at most $3$ agents.

Despite this strong negative result, we can efficiently, in parallel, compute an EF1 and PO allocation when there are a constant number of inherent values, even when agents positively value more items, and items are positively valued by more agents. Furthermore, quite similarly to Round-Robin, the allocations output by our algorithm are ``balanced,'' in the sense that agents receive the same number of items (up to divisibility issues). The complexity of our algorithm is parameterized by $t$, the number of inherent values: it runs in time $O(\log^2(mn))$ and requires $O(m^{5.5+t}n^{5.5})$ processors. Our algorithm is via a reduction to the problem of \emph{maximum weight perfect matching} in a bipartite graph. A beautiful result of~\cite{NCMatching1987MulmuleyVaziVazi} shows that a \emph{minimum} weight perfect matching (which can be used to find a maximum weight perfect matching) can be found efficiently in parallel when the weight of the heaviest edge is polynomially bounded. Our reduction creates multiple copies of each agent such that the unique item matched to the $j$-th copy corresponds to the item allocated in the $j$-th round of \emph{some} Round-Robin procedure (and hence the overall allocation is EF1). The edge weights are increasing (for different copies of the same agent), in a way that every maximum weight matching must give a high-value item to a copy of agent $i$ before giving two high-value items to copies of a different agent. The restriction on the valuations allows us to control the rate at which the weights increase, and specifically bound the maximum weight by a polynomial, so that the algorithm of~\cite{NCMatching1987MulmuleyVaziVazi} can be used. We note that when weights are not bounded by a polynomial, the maximum weight matching problem cannot be solved efficiently in parallel (formally, the problem is \textit{CC}-hard), so removing the condition on the valuation functions would require a fundamentally different approach.

Finally, in Section~\ref{sec: subsidy}, we study the problem  of fair allocation with subsidies~\cite{HalpernShah2019subsidy,BrustleVetta2020OneDollarEach}, where the goal is to find an integral allocation of the items as well as payments to the agents, such that the overall solution is envy-free. We give an \textit{NC} algorithm for this problem, and, in fact, prove that one can compute similar solutions in parallel even in the presence of additional constraints on the payments, e.g. ``A should not be paid more than B'', or ``A should be paid no more than 10 dollars.'' We formulate the problem of finding a constraint-satisfying and envy-eliminating vector as a purely graph-theoretic problem on a graph we call the \textit{payment rejection graph}. The constraints are included by adding edges to this graph. In the most general sense, we can add edges to our graph that correspond to a constraint of the form ``if agent $i$ gets paid more than $x$ dollars, then agent $j$ must get paid more than $y$ dollars''. Any meaningful overall constraint that can be formulated as a set of such smaller constraints can be added to the problem instance. We highlight that it is not straightforward to implement such constraints in the existing algorithms for the fair division with subsidies problem, especially if one insists on a parallel solution. Our main insight here is that by carefully constructing a large graph to represent the set of all payment vectors, the problem of simultaneously eliminating envy and satisfying constraints can be solved by computing directed reachability in parallel.


\subsection{Related work}

Understanding the parallel complexity of various problems has been a central theme in theoretical computer science, with some major recent breakthroughs, e.g.~\cite{anari2020planar}. However, the parallel complexity of problems in fair division remains relatively unstudied. The closest works to ours are that of \cite{ZhengGarg2020ParDistAlgorithmsHousing} and \cite{Friedman1993AllocationParallel}. \cite{ZhengGarg2020ParDistAlgorithmsHousing} study the housing allocation and housing market problems, and give parallel and distributed algorithms. The housing allocation problem asks for a matching between $n$ agents and $n$ houses when agents have strict orderings over the houses. The housing market problem asks for a matching between $n$ agents and $n$ houses when the agents arrive at a market each owning a single house. On the flip side,~\cite{ZhengGarg2020ParDistAlgorithmsHousing} show that finding the core of a housing market is \textit{CC}-hard by showing that the Top-Trading Cycle Algorithm also solves a \textit{CC}-complete problem: Lexicographically First Maximal Matching. In \cite{Friedman1993AllocationParallel}, the authors study the parallel complexity of allocating $m$ divisible homogeneous resources to a set of $n$ agents with nondecreasing utility functions over the amount of each resource received. They show, that for $n$ processors, the parallel time complexity of finding an allocation that has  welfare no more than $\epsilon$ less than a welfare-maximizing allocation is lower bounded by $\Omega(m\log \frac{1}{n\epsilon})$. They also give an efficient parallel algorithm that computes an approximately accurate solution for $m=2$ resources.



\section{Preliminaries}
We consider the problem of allocating a set $\items$ of indivisible goods, labeled by  $\{1, \dots, m\}$, to a set of agents $\agents$, labeled by $\{1, \dots, n\}$. A \textit{fractional} allocation $X \in [0,1]^{n\cdot m}$ defines for each agent $i \in \agents$ and $j \in \items$ the fraction of item $j$ that agent $i$ receives. A allocation $X$ is  \textit{integral} if $X_{i,j} \in \{0,1\}$ for all $i \in \agents$ and $j \in \items$. An allocation $X = (X_1, \dots, X_n)$ is \textit{complete} if $\cup_{i \in \agents} X_i = \items$ and \textit{partial} otherwise. Unless stated otherwise, we use allocation to refer to a complete allocation. We use the term \textit{bundle} to refer to a subset of items, and use [$k$] to denote the set $\{1, \dots, k\}$.

Each agent $i \in \agents$ has a private valuation function $v_i: 2^{\items} \rightarrow \mathbb{R}_+$ which describes the utility agent $i$ receives for each bundle. A valuation function $v_i$ is \emph{additive} if $v_i(X_i) = \sum_{j \in X_i} X_{i,j} \cdot v_i(\{ j \})$. A valuation function $v_i$ is \emph{restricted additive} when $v_i$ is additive, and for each item $g \in \items$, $v_i(g) \in \{0, v(g)\}$. 
To ease notation, we write $v_{i,j} = v_i(\{ j \})$ for the value of agent $i$ for item $j$. 

An allocation $X$ is \emph{envy-free} (EF) if $v_{i}(X_i) \geq v_{i}(X_j)$ for all agents $i,j \in \mathcal{N}$. Since integral EF allocations don't always exist (e.g. consider the case of a single item and two agents that have positive value for it), the community has turned to notions of approximate fairness. An integral allocation $X$ is \emph{envy-free up to one good} (EF1) if for all agents $i,j \in \mathcal{N}$ there exists a good $g \in X_j$ such that $v_{i}(X_i) \geq v_{i}(X_j \backslash g)$~\cite{LiptonMossel04indivisibleGoods}. An integral allocation $X$ is \emph{envy-free up to any good} (EFX) if for all agents $i,j \in \mathcal{N}$, for all goods $g \in X_j$, $v_{i}(X_i) \geq v_{i}(X_j \backslash g)$~\cite{Cargannis2019UnreasonableFairness}. The \textit{envy-graph} for an allocation $X$ is the complete weighted directed graph $G_X = (\agents, E)$, where there is a vertex for each agent $i \in \agents$, and there is an edge $e \in E$ from vertex $i$ to vertex $j$ with the weight $v_i(X_j) - v_i(X_i)$~\cite{LiptonMossel04indivisibleGoods}.

An allocation $X$ \textit{Pareto dominates} another allocation $Y$ if $v_i(X_i) \geq v_i(Y_i)$, for all $i \in \agents$, and there exists some agent $j$ such that $v_j(X_j) > v_j(Y_j)$. An integral allocation is called \textit{Pareto-Optimal} (PO) or \textit{Pareto-Efficient} (PE) if no other integral allocation Pareto dominates it. An allocation is called \textit{Fractionally Pareto-Optimal} (fPO) if no other (integral or fractional) allocation Pareto dominates it.

\paragraph{Fair division with subsidies.} 
In the problem of fair division with subsidies, we eliminate the envy of an allocation by using payments. An \textit{allocation with payments} $X_{\vec{q}} = (X,\vec{q})$ is a tuple of an integral allocation $X$ and a payment vector $\vec{q} = (q_1, \dots, q_n)$, where $q_i$ is the payment to agent $i$. Under such an allocation with payments $X_{\vec{q}}$, agent $i$'s utility is $v_i(X_i) + q_i$. We can extend the definition of envy-freeness to this setting: an allocation with payments $(X,\vec{q})$ is \emph{envy-free} if $v_{i}(X_i) + q_i \geq v_{i}(X_j) + q_j$ for all agents $i,j \in \mathcal{N}$. An allocation $X$ is \emph{envy-freeable} if there exists a payment vector $\vec{q}$ such that $(X,\vec{q})$ is envy-free. For a given envy-freeable allocation $X$, a payment vector $\Vec{q}$ is \textit{envy-eliminating} if the allocation with payments $X_{\vec{q}}$ is \textit{envy-free}. \cite{HalpernShah2019subsidy} prove that, given an envy-freeable allocation $X$, one can find an envy-eliminating payment vector for $X$ by computing all-pairs-shortest paths on the envy graph of $X$ with the edge weights negated.

\paragraph{Parallel computation.}
For sequential algorithms, our model of computation is typically a single processor that has access to some memory. For parallel algorithms, in this paper we adopt the CREW (\textbf{C}oncurrent \textbf{R}ead \textbf{E}xclusive \textbf{W}rite) PRAM (Parallel RAM) model of computation~\cite{Karp1990sharedMemory}. The CREW PRAM model allows simultaneous access to any one memory location for read instructions only.\footnote{It is well known that the strongest PRAM model, the CRCW PRAM model, with $p$ processors can be simulated by the weakest PRAM model, the EREW PRAM model, with $p$ processors with at most a $O(\log p)$ factor slowdown~\cite{Jaja92bookParallelAlg}.} We assume a shared memory model where each processor has some local memory to execute its program and all processors can access global shared memory. Additionally, all computation is \textit{synchronous}, i.e., all processors are coordinated by some common clock.

To describe parallel algorithms, we use $p_k$ to denote the $k$-th processor. Often we will index processors by items or agents or pairs, e.g. $p_j$ for the processor assigned to item $j$, or $p_{(i,j)}$ for the processor assigned to the agent $i$, item $j$ pair. We give the basic notions of efficiency and hardness in the parallel world as well as some useful parallel primitives. A reader familiar with parallel algorithms can safely skip the remainder of this section.

For sequential algorithms, we seek polynomial time algorithms, a.k.a algorithms in \textit{P}. The analog of this for parallel algorithms is \textit{NC} or \textit{Nick's Class}. The randomized counterpart of \textit{P} is \textit{RP}; similarly, here we have \textit{RNC}. 

\begin{definition}[\textit{NC}$^k$~\cite{Jaja92bookParallelAlg}]
The class \textit{NC}$^k$ includes all problems of input size $N$ that can be solved in time $O(\log^k n)$ using a polynomial in $N$ number of processors.
\end{definition}

\begin{definition}[\textit{RNC}$^k$~\cite{Jaja92bookParallelAlg}]
The class \textit{RNC}$^k$ includes all problems of input size $N$ that can be solved in time $O(\log^k n)$ using a polynomial in $N$ number of processors, where each processor can generate an (independently drawn) uniformly random integer in the range $[1, \dots, M]$ for some integer $M \geq 1$.
\end{definition}

The class \textit{NC} (resp. \textit{RNC}) includes all problems of input size $N$ that are in \textit{NC}$^k$ (resp. \textit{RNC}$^k$) for some constant (with respect to $N$) $k$. We seek \textit{NC} and \textit{RNC} algorithms. That is, when we say that some problem can be solved efficiently in parallel, this means there is an \textit{NC} or \textit{RNC} algorithm for it. 

On the flip side, when we say that a problem cannot be solved efficiently in parallel, we mean that the problem is \textit{CC}-hard\footnote{Another notion of parallel hardness, \textit{P-Completeness}, is often used to identify problems in \textit{P} that seem to be inherently sequential and thus are likely to not admit any efficient parallel algorithm. The classes \textit{NC} and \textit{CC} are incomparable as are \textit{RNC} and \textit{CC}~\cite{Ruzzo95LimitsParallel}. Currently, no fast parallel algorithms are known for problems in \textit{CC}.}. To define the complexity class \textit{CC}, we need to define the Circuit Comparator Value Problem (CCVP) and comparator gates. A \textit{comparator gate} is a gate that has two inputs and two outputs. The first output wire outputs the minimum of the inputs, and the second output wire outputs the maximum of the inputs. CCVP is defined as follows: given a circuit of comparator gates, the inputs to the circuit, and one output wire of the circuit, calculate the value of this wire.

\begin{definition}[\textit{CC}~\cite{MAYR1992CCFirstComplexity}]
\textit{CC}  is the class of all problems that are log-space many-one reducible to CCVP.
\end{definition}

The class \textit{CC} is not known to be in \textit{NC} nor \textit{P}-Complete, and, if some problem is \textit{CC}-hard, this fact can be taken as evidence that the problem does not admit an efficient parallel solution. The class \textit{CC} has natural complete problems, such as the Stable Marriage Problem and the Lexicographically First Maximal Matching problem~\cite{Ruzzo95LimitsParallel}. 

\subsection{Useful parallel primitives}

When describing efficient sequential algorithms we utilize various primitives, e.g. summation, multiplication, sorting, max-weight matching, etc, that take polynomial time, and we can assume the reader knows, without proof. In the case of parallel algorithms, we find it instructive to state some of these useful primitives in this subsection.

\textbf{Sum.} We can efficiently take the sum of $n$ numbers in parallel. To see this, notice that we can use $n$ processors to create a binary tree where the leaves of the tree are the $n$ numbers. In each time step, we use a processor to sum two values and then pass that value up the tree. In $O(\log n)$ steps, we will have the sum of all numbers.

\textbf{Sorting.} We can efficiently sort $n$ numbers in parallel. For a more detailed discussion on parallel sorting algorithms, we refer the reader to \cite{Jaja92bookParallelAlg}. In our parallel algorithms, we use the bitonic sorting network. The bitonic sorting network requires $O(\log^2 n)$ time and uses $O(n)$ processors. It is theoretically possible to sort in parallel using only $O(\log n)$ time using $O(n)$ processors via the AKS sorting network, but the constant hidden by the big-O notation is too large for use in practice~\cite{Ajtai1983AKSsorting}.

\textbf{Reduction Operators.} A reduction operator allows us to quickly aggregate the entries of an array into one value in parallel. We will often find the maximum (or minimum) of a list of $n$ values. We can execute this in $O(\log n)$ time by using $O(n)$ processors. Similar to computing sums, we create a binary tree where the leaves of the tree are the $n$ numbers. In each time step, we use  a separate processor to compute the maximum (or minimum) of two values and then pass that value up the tree. In $O(\log n)$ steps, we will have the maximum (or minimum) of all numbers.  Similarly, for binary entries, we can compute the AND or OR over all entries. 

\textbf{Graph Algorithms.} Many problems on graphs can be solved efficiently in parallel. For example, we can compute all-pairs shortest paths and find the minimum spanning tree efficiently (and deterministically) in parallel \cite{Jaja92bookParallelAlg}. We can find minimum weight perfect matchings~\cite{NCMatching1987MulmuleyVaziVazi} and find the global minimum cut of an undirected graph efficiently in parallel by utilizing randomization~\cite{karger1993global}.  For brevity, this is all we list here and refer the reader to~\cite{Jaja92bookParallelAlg} for more parallel graph algorithms. 

\section{Verification of fairness}\label{sec:verification}

As a warm-up, we begin by showing that, given an allocation, we can efficiently, in parallel, verify its fairness properties.

\begin{theorem}\label{thm: checking ef}
Given an allocation $X$ and the valuation functions of $n$ additive (or restricted additive) agents, the problem of deciding whether $X$ satisfies EF is in \textit{NC}.
\end{theorem}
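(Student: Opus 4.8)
The plan is to verify envy-freeness by checking, for every ordered pair of agents $(i,j)$, whether the envy condition $v_i(X_i) \geq v_i(X_j)$ holds, and then aggregating these pairwise checks with a global AND. There are $n^2$ such pairs, and the final answer is simply the conjunction of the $n^2$ boolean outcomes, so the overall structure is an embarrassingly parallel computation followed by a logarithmic-depth reduction.

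First I would assign, for each ordered pair $(i,j)$, a group of processors responsible for computing the two quantities $v_i(X_i)$ and $v_i(X_j)$. Since the valuations are additive, each of these is a sum of at most $m$ terms of the form $X_{j,g}\cdot v_{i,g}$ over the items $g$. Using the parallel \textbf{Sum} primitive described in the preliminaries, each such sum of $m$ numbers can be computed in $O(\log m)$ time using $O(m)$ processors arranged in a binary tree. Computing both $v_i(X_i)$ and $v_i(X_j)$ and then comparing them yields a single bit indicating whether $i$ envies $j$. Doing this for all $n^2$ ordered pairs in parallel requires $O(n^2 m)$ processors and runs in $O(\log m)$ time.

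Next I would combine the $n^2$ envy bits into a single answer. The allocation is EF exactly when no agent envies another, i.e. when the AND of all $n^2$ bits (where each bit records that the corresponding envy condition is satisfied) evaluates to true. By the \textbf{Reduction Operators} primitive, this AND over $n^2$ binary values can be computed in $O(\log(n^2)) = O(\log n)$ time using $O(n^2)$ processors. Thus the total running time is $O(\log m + \log n) = O(\log(mn))$, placing the problem in \textit{NC}$^1$, and the total processor count is $O(n^2 m)$, which is polynomial in the input size. The restricted additive case requires no change, since restricted additivity is just a special case of additivity in which each $v_{i,g} \in \{0, v(g)\}$, so the same summation and comparison logic applies verbatim.

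I do not expect any genuine obstacle here: this is a warm-up result, and the only thing to be careful about is bookkeeping the processor allocation so that the $n^2$ pairwise computations do not contend for the same memory cells during their writes (reads of the shared input $X$ and the valuations are fine under CREW, since concurrent reads are permitted). The mild subtlety worth stating explicitly is that each pairwise check writes its envy bit to a distinct cell of an auxiliary array indexed by $(i,j)$, so exclusive-write is respected, and the subsequent tree-based AND reduction likewise writes intermediate results to distinct cells at each level.
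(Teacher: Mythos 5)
Your proposal is correct and follows essentially the same approach as the paper's proof: parallel sums to compute $v_i(X_i)$ and $v_i(X_j)$ for each ordered pair, one envy bit per pair written to a distinct cell, and a logarithmic-depth AND/min reduction, with the same $O(\log m + \log n)$ time and $O(n^2 m)$ processor bounds. The only nit is your parenthetical claim of \textit{NC}$^1$ — $O(\log(mn))$ CREW PRAM time with polynomially many processors does not directly yield \textit{NC}$^1$ circuits (the standard simulation gives \textit{NC}$^2$), but the paper only claims membership in \textit{NC}, which your argument establishes.
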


\begin{proof}

We wish to test whether or not each agent prefers their own bundle to any other agent's bundle. For each ordered pair of agents $(i,j)$, we assign $|X_i| + |X_j| \leq m $ processors.  First, we compute the value of $v_i(X_i)$ and $v_i(X_j)$ using parallel sum procedures; each sum takes $O(\log m)$ time. Next, we test whether $v_i(X_i) \geq v_i(X_j)$. For each ordered pair of agents $(i,j)$, we assign one bit in memory, initially set to 0. If $v_i(X_i) \geq v_i(X_j)$, processor $p_{(i,j)}$ will flip the bit indexed by the agent pair $(i,j)$ to $1$. Setting this bit for all ordered pairs is done simultaneously. Finally, using $n^2$ processors, we take the minimum across these bits to find if there is any pair of agents that does not respect envy-freeness; this step takes $O(\log n^2)$ time; if the minimum is 1, then the allocation is envy-free. We overall used at most $O(n^2 m)$ processors, and the total time was $O(\log m + \log n)$. 
\end{proof}

To test if an allocation is EF1, we use similar ideas to that of testing EF. For every ordered pair of agents, we allocate $O(m)$ processors to test whether or not the removal of each item from $j$'s bundle eliminates $i$'s envy. For each item, we set a separate bit to 1 to signify whether or not that item's removal eliminates envy. We take the maximum across all bits to see if there is any one item that satisfies EF1. Then we ensure that the minimum for all ordered pairs of agents is 1.

To test if an allocation is EFX, we run the same procedure as that of testing EF1 except instead of computing maximums of the inequality bits, we compute minimums. It is straightforward to see that this difference results in a correct EFX verification procedure. We include the full proofs for completeness. 

\begin{theorem}\label{thm: checking ef1}
Given an allocation $X$ and the valuation functions of $n$ additive (or restricted additive) agents, the problem of deciding whether $X$ satisfies EF1 is in \textit{NC}.
\end{theorem}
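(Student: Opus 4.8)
The plan is to mimic the structure of the EF verification algorithm from \Cref{thm: checking ef}, but add one extra layer of parallelism to handle the ``up to one good'' quantifier. Recall that $X$ is EF1 if and only if for every ordered pair of agents $(i,j)$ there \emph{exists} some good $g \in X_j$ such that $v_i(X_i) \geq v_i(X_j \setminus \{g\})$. So the logical structure is a universal quantifier over pairs on the outside and an existential quantifier over goods on the inside; I will evaluate the inner existential with an OR-reduction and the outer universal with an AND-reduction (equivalently, a min over bits), both of which are available as $O(\log \cdot)$-depth primitives per the preliminaries.

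Concretely, first I would assign $O(m)$ processors to each ordered pair $(i,j)$, for a total of $O(n^2 m)$ processors. For a fixed pair $(i,j)$, I want one bit $b_{i,j,g}$ for each good $g \in X_j$ recording whether removing $g$ from $j$'s bundle kills $i$'s envy, i.e.\ whether $v_i(X_i) \geq v_i(X_j) - v_{i,g}$. The key observation that keeps this efficient is that I do not recompute the sum $v_i(X_j \setminus \{g\})$ from scratch for each $g$: I compute $v_i(X_i)$ and $v_i(X_j)$ once via the parallel sum primitive in $O(\log m)$ time, and then the value $v_i(X_j \setminus \{g\})$ is just $v_i(X_j) - v_{i,g}$, a single subtraction done in parallel across all $g$ by the processors assigned to the pair. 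Each processor then sets its bit $b_{i,j,g}$ in $O(1)$ time. Taking the OR (equivalently, the maximum) of the bits $\{b_{i,j,g}\}_g$ via a reduction gives, in $O(\log m)$ time, a single bit $c_{i,j}$ that is $1$ exactly when the EF1 condition holds for the pair $(i,j)$.

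Finally I would AND together (take the minimum of) the $n^2$ bits $c_{i,j}$ over all ordered pairs using $O(n^2)$ processors in $O(\log n)$ time; the allocation is EF1 iff this aggregate is $1$. Summing the costs, the algorithm uses $O(n^2 m)$ processors and runs in $O(\log m + \log n)$ time, placing the problem in \textit{NC}.

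I do not anticipate a genuine obstacle here, since the algorithm is a direct elaboration of the EF procedure; the only point requiring a little care is making sure the inner existential is handled correctly. In particular, one must confirm that the $|X_j|$ bits are initialized and combined so that an empty bundle $X_j$ (or a pair already satisfying EF) does not spuriously fail—this is handled because when $i$ does not envy $j$ even with the full bundle, every bit $b_{i,j,g}$ is already $1$, so the OR is $1$. It is worth contrasting this with the EFX check described informally in the text: EFX replaces the inner existential (``some $g$'') with a universal (``every $g$''), which corresponds to swapping the inner OR/maximum for an AND/minimum while leaving the rest of the construction untouched, so the same resource bounds apply.
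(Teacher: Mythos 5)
Your proposal is correct and follows essentially the same approach as the paper's proof: the same $O(m)$ processors per ordered pair, the same parallel sums for $v_i(X_i)$ and $v_i(X_j)$, the same per-good bits combined by an inner OR/maximum and an outer AND/minimum reduction, yielding the identical $O(\log m + \log n)$ time and $O(n^2 m)$ processor bounds. Your explicit observation that additivity lets you compute $v_i(X_j \setminus \{g\}) = v_i(X_j) - v_{i,g}$ with a single subtraction per good merely makes precise a step the paper leaves implicit.
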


\begin{proof}
For each (ordered) pair of agents $(i,j)$ we will allocate $|X_j| \leq m$ processors. Each processor is in charge of testing whether or not the removal of a specific item in $j$'s bundle will reduce $i$'s value for $j$'s bundle so that $i$ no longer envies $j$. 
Let $X_{j,k}$ denote the $k$-th item in $j$'s bundle in allocation $X$, and let $p_{(i,j,k)}$ be the processor assigned to the (ordered) pair $(i,j)$ and item $X_{j,k}$. $p_{(i,j,k)}$ tests the following inequality:
$v_i(X_i) \geq v_i(X_j \setminus \{X_{j,k}\})$. The values for $v_i(X_i)$ and $v_i(X_j)$ can be computed in parallel via parallel sum.  

For each ordered pair of agents $(i,j)$, we will also allocate $m$ bits in shared memory. These bits will initially be set to 0. If any processor $p_{(i,j,k)}$ assigned to the ordered pair $(i,j)$ finds that the $k$-th item makes the inequality hold, the corresponding bit is set to 1. After all processors test their assigned inequality, we compute the maximum (the OR operation) of these $m$ bits for each agent. This can be done in $O( \log m)$ time using $m$ additional processors, via a tournament.\footnote{Think of building a binary-tree bottom-up, with the leaves corresponding to the original $m$ bits. In the first time step, processor $i$ takes the maximum of the leaves in positions $2i$ and $2i + 1$ and stores it in the corresponding parent node. In the second time step, processor $i$ takes the maximum of the nodes in positions $i$ and $i+1$ from the parent nodes in the previous step, and so on.}
If the maximum of these $m$ bits is $1$, then there is one item that can be removed from $j$'s bundle such that $i$ no longer envies $j$. We can compute this ``EF1-bit'' for all $n^2$ ordered pairs of agents in parallel. Finally, we take the minimum (AND operator) of these $n^2$ bits in a similar way; if this minimum bit is $0$, then there exists a pair of agents that do not satisfy the EF1 relation, and otherwise, EF1 is satisfied for all pairs. 

The time complexity of this process is $O(1)$ time to populate the bits and then $O(\log n + \log m)$ time to compute the maximums, minimums, and sums. We use $O(n^2 m)$ processors.
\end{proof}

\begin{theorem}\label{thm: checking efx}
Given an allocation $X$ and the valuation functions of $n$ additive (or restricted additive) agents, the problem of deciding whether $X$ satisfies EFX is in \textit{NC}.
\end{theorem}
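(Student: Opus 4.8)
The plan is to mirror the EF1 verification procedure of \Cref{thm: checking ef1} almost verbatim, changing only the operator used to aggregate the per-item bits. The crucial observation is that the EFX definition differs from EF1 in the quantifier over goods: EF1 requires that for each ordered pair $(i,j)$ there \emph{exists} some $g \in X_j$ whose removal eliminates $i$'s envy, whereas EFX requires that the envy be eliminated by the removal of \emph{every} $g \in X_j$. In the EF1 algorithm this existential quantifier was implemented by taking the maximum (OR) of the $m$ item-bits; here the universal quantifier will instead be implemented by taking the minimum (AND).

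First I would, for each ordered pair of agents $(i,j)$, assign $|X_j| \leq m$ processors, one per item in $X_j$. Using the parallel sum primitive I compute $v_i(X_i)$ and $v_i(X_j)$ in $O(\log m)$ time. Then processor $p_{(i,j,k)}$, assigned to the $k$-th item $X_{j,k}$ of $j$'s bundle, tests the inequality $v_i(X_i) \geq v_i(X_j \setminus \{X_{j,k}\}) = v_i(X_j) - v_{i,X_{j,k}}$ and writes a single bit to shared memory: $1$ if the inequality holds and $0$ otherwise. Computing the difference from the precomputed value $v_i(X_j)$ avoids recomputing a sum for each item and keeps the processor count at $O(n^2 m)$.

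Next comes the only point of departure from the EF1 proof. For each ordered pair $(i,j)$ I take the \emph{minimum} (the AND) of its $m$ item-bits, rather than the maximum. Because EFX requires that removing \emph{every} good eliminate $i$'s envy, the pair $(i,j)$ satisfies the EFX condition exactly when all $m$ bits equal $1$, which is precisely their AND. As with the reduction operators discussed in the preliminaries, this AND is computed in $O(\log m)$ time using $O(m)$ processors via a tournament tree. Finally I take the AND of the resulting $n^2$ per-pair bits; the allocation $X$ is EFX if and only if this final bit is $1$, since that happens exactly when every pair and every good respects the EFX inequality. This last aggregation costs $O(\log n)$ time, so overall the algorithm runs in $O(\log n + \log m)$ time with $O(n^2 m)$ processors, placing the problem in \textit{NC}.

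I do not expect any genuine obstacle, as the statement is a near-immediate variant of \Cref{thm: checking ef1}; the only subtlety worth stating explicitly in the write-up is the correctness argument that swapping OR for AND faithfully captures the universal-over-goods quantifier of EFX, together with the remark that the per-item values $v_i(X_j \setminus \{X_{j,k}\})$ are obtained by a single subtraction from $v_i(X_j)$ so that the resource bounds match those of the EF1 procedure.
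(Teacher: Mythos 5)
Your proposal is correct and matches the paper's own proof essentially verbatim: both replace the per-pair maximum (OR) over the $m$ item-bits in the EF1 procedure of \Cref{thm: checking ef1} with a minimum (AND) to capture the universal quantifier over goods, then take the AND over all $n^2$ ordered pairs, yielding $O(\log m + \log n)$ time with $O(n^2 m)$ processors. Your remark that each $v_i(X_j \setminus \{X_{j,k}\})$ is a single subtraction from the precomputed $v_i(X_j)$ is a nice explicit detail the paper leaves implicit, but it does not change the argument.
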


\begin{proof}
For each (ordered) pair of agents $(i,j)$ we will allocate $|X_j| \leq m$ processors. Each processor is in charge of testing whether or not the removal of a specific item in $j$'s bundle will reduce $i$'s value for $j$'s bundle so that $i$ no longer envies $j$. 
Let $X_{j,k}$ denote the $k$-th item in $j$'s bundle in allocation $X$, and let $p_{(i,j,k)}$ be the processor assigned to the (ordered) pair $(i,j)$ and item $X_{j,k}$. $p_{(i,j,k)}$ tests the following inequality:
$v_i(X_i) \geq v_i(X_j \setminus \{X_{j,k}\})$. The values for $v_i(X_i)$ and $v_i(X_j)$ can be computed in parallel via parallel sum.  

For each ordered pair of agents $(i,j)$, we will also allocate $m$ bits in shared memory. These bits will initially be set to 0. If any processor $p_{(i,j,k)}$ assigned to the ordered pair $(i,j)$ finds that the $k$-th item makes the inequality hold, the corresponding bit is set to 1. After all processors test their assigned inequality, we compute the minimum (the AND operation) of these $m$ bits for each agent. This can be done in $O( \log m)$ time using $m$ additional processors, via a tournament.
If the minimum of these $m$ bits is $1$, then any item can be removed from $j$'s bundle to ensure that $i$ no longer envies $j$. We can compute this ``EFX-bit'' for all $n^2$ ordered pairs of agents in parallel. Finally, we take the minimum (AND operator) of these $n^2$ bits in a similar way; if this minimum bit is $0$, then there exists a pair of agents that do not satisfy the EFX relation, and otherwise, EFX is satisfied for all pairs. 

The time complexity of this process is $O(1)$ time to populate the bits and then $O(\log m + \log n)$ time to compute the minimums, and sums. We use $O(n^2 m)$ processors.
\end{proof}
\section{EF1 allocations for two and three additive agents}\label{sec:ef1 for two and three}

In this section, we discuss how to efficiently, in parallel, compute EF1 allocations for two and three additive agents. 

Our algorithms work via a reduction. Specifically, Oh et al.~\cite{OhArielWarut2019ManyGoodsFewQueries} prove that EF1 allocations can be found using a logarithmic number of value queries\footnote{A value query on input $i$, $S \subseteq \mathcal{M}$ returns the value $v_i(S)$ of agent $i$ for the subset $S$ of items.} for two agents with monotonic utilities and three agents with additive utilities. For sequential algorithms and additive agents, implementing a query takes $O(m)$ time, since one needs to sum the values of the items in a subset. However, using $O(m)$ processors, one can implement a value query in $O(\log m)$ time. Therefore, the results of~\cite{OhArielWarut2019ManyGoodsFewQueries} can be directly translated to our setting. 

\begin{theorem}\label{thm:k query to parallel}
For the case of additive agents, if there exists a query algorithm that uses $k$ queries to compute an allocation $X$, then there exists a parallel algorithm that uses $O(m)$ processors and computes $X$ in time $O(k\log m)$.
\end{theorem}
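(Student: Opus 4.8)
The plan is to simulate the query algorithm step by step, replacing each sequential value query with its parallel counterpart. The crucial observation, already noted above, is that for an additive agent $i$ a value query $v_i(S) = \sum_{j \in S} v_{i,j}$ is nothing more than a summation over the items in $S$. First I would represent the queried set $S$ by its characteristic vector in shared memory, assigning processor $p_j$ to item $j$; each $p_j$ reads its bit and contributes $v_{i,j}$ precisely when $j \in S$. By the parallel \textbf{Sum} primitive, aggregating these contributions up a binary tree computes $v_i(S)$ in $O(\log m)$ time using $O(m)$ processors.

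Next I would address the fact that the query algorithm may be \emph{adaptive}: the $\ell$-th query can depend on the answers to queries $1, \dots, \ell-1$. Consequently we cannot evaluate the $k$ queries simultaneously, and must instead process them sequentially. The algorithm's internal state is maintained in shared memory; in each round this state determines the next pair $(i, S)$, the $O(m)$ processors compute $v_i(S)$ in $O(\log m)$ time as above, the answer is written back, and the state is advanced. Because the CREW PRAM is synchronous and permits concurrent reads, the current query $S$ and the algorithm's state are available to every processor without contention, and the same $O(m)$ processors are reused across all $k$ rounds.

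Summing over the $k$ queries yields total running time $O(k \log m)$ with processor count $O(m)$, as claimed. The one subtlety to flag is the non-query computation the algorithm performs between queries to decide its next move; in the query-complexity model this work is not what is being bounded, and for the specific algorithms of \cite{OhArielWarut2019ManyGoodsFewQueries} it reduces to a constant number of comparisons and bookkeeping operations per query, dominated by the $O(\log m)$ cost of the query itself. I expect this to be the only point requiring care: the main conceptual reason the bound is $O(k \log m)$, rather than something sublinear in $k$, is precisely that adaptivity forbids parallelizing across queries, so the $k$ rounds must be serialized even though each individual round is fast.
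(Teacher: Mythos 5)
Your proposal is correct and follows essentially the same route as the paper's proof: implement each value query $query(S,i)$ by a parallel-sum over the characteristic vector of $S$ in $O(\log m)$ time with $O(m)$ processors, and execute the $k$ queries sequentially for a total of $O(k\log m)$ time. Your explicit discussion of adaptivity and of the non-query bookkeeping between rounds is in fact more careful than the paper's proof, which leaves both points implicit, but it does not change the argument.
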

\begin{proof}
Consider any sequential fair division algorithm $\mathcal{A}$ for additive agents  that only has query access to agents' valuations, and specifically, it can ask a query, $query(S, i)$, to learn the value of subset $S$ for agent $i$. Suppose $\mathcal{A}$ requires $k$ query calls. We give a parallel algorithm that efficiently implements $query(S, i)$.
In order to implement $query(S, i)$, we run a parallel-sum procedure using $O(m)$ processors on the elements specified by $S$, using the valuation function of agent $i$. In $O(\log m)$ time, we then have the sum of all item values in $S$ for agent $i$. Since $\mathcal{A}$ uses $k$ queries, and for each query we compute a sum, we get an overall runtime of $O(k \log m)$ using $O(m)$ processors.
\end{proof}
As corollaries, we can derive \textit{NC}  algorithms that produce EF1 allocations for two or three agents with additive utilities via the algorithms of Oh et al.~\cite{OhArielWarut2019ManyGoodsFewQueries}, since these algorithms have polylogarithmically many value queries.
 
\begin{corollary}\label{{cor: two and three agents}}
The problem of finding an EF1 allocation for two and three additive agents is in \textit{NC}.
\end{corollary}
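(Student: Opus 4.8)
The plan is to obtain the corollary as an essentially immediate consequence of the query-to-parallel reduction of \Cref{thm:k query to parallel} together with the query-efficient EF1 algorithms of Oh et al.~\cite{OhArielWarut2019ManyGoodsFewQueries}. Those algorithms compute an EF1 allocation for two monotone agents (and hence, in particular, two additive agents) and for three additive agents using only $k = O(\mathrm{polylog}(m))$ value queries. Since both of these are instances of the additive setting covered by \Cref{thm:k query to parallel}, I would instantiate that theorem taking the relevant Oh et al. procedure as the sequential query algorithm $\mathcal{A}$.

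Plugging the bound $k = O(\mathrm{polylog}(m))$ into the conclusion of \Cref{thm:k query to parallel} gives a parallel algorithm running in time $O(k \log m) = O(\mathrm{polylog}(m))$ and using $O(m)$ processors. As $n \in \{2,3\}$ is constant, the input size is $\Theta(m)$, so this running time is polylogarithmic in the input size while the processor count is polynomial; by definition this places the problem in \textit{NC}, which is exactly the claim.

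The only point meriting care --- and thus the main ``obstacle,'' modest as it is --- lies in the adaptivity of the Oh et al. queries: the subset probed at step $t$ depends on the answers to the earlier queries, so the $k$ queries cannot be merged into one parallel phase and must instead be run in $k$ sequential rounds, each realized by the $O(\log m)$-time, $O(m)$-processor parallel-sum primitive underlying \Cref{thm:k query to parallel}. I would therefore confirm that the bookkeeping performed between consecutive queries (choosing the next query from the history of answers and assembling the final bundles) uses only comparisons, simple arithmetic, and index manipulation, all of which fit within the per-round $O(\log m)$-time, $O(m)$-processor budget and hence do not dominate the overall $O(k \log m)$ bound. With this check the round-by-round simulation stays within \textit{NC}, completing the argument.
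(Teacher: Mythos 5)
Your proposal is correct and follows essentially the same route as the paper: instantiate \Cref{thm:k query to parallel} with the query-efficient EF1 algorithms of Oh et al.~\cite{OhArielWarut2019ManyGoodsFewQueries}, whose polylogarithmic query count yields an $O(k\log m)$-time, $O(m)$-processor simulation and hence membership in \textit{NC}. Your extra remark about adaptivity --- simulating the $k$ queries in sequential rounds rather than one parallel phase --- is exactly what the paper's proof of \Cref{thm:k query to parallel} does implicitly, so it is a welcome clarification rather than a deviation.
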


Next, we notice that the two-agent algorithm of Oh et al.~\cite{OhArielWarut2019ManyGoodsFewQueries} mimics the classic cut-and-choose algorithm from continuous cake-cutting. The authors show that for any ordering of indivisible items on a line, there exists a way for the first agent to cut (split the items into two pieces) such that when the second agent selects her favorite piece, the overall allocation is EF1. The main difficulty is, of course, finding this cut using only a logarithmic number of queries. Here, we observe that since such a cut can be found for an arbitrary ordering of the items, by ordering the items in increasing $v_{1,j}/v_{2,j}$ (mimicking the adjusted-winner process \cite{brams1996procedure}) we can also guarantee fractional Pareto efficiency (fPO). Since the basic operations (sorting, adding, etc) in the adjusted winner process can be done in parallel, we overall get a fractionally PO and EF1 \textit{NC} algorithm. 


\begin{theorem}\label{thm: ef1 + fpo for two}   
The problem of finding an fPO and EF1 allocation for two additive agents
is in \textit{NC}.
\end{theorem}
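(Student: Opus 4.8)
The plan is to parallelize the adjusted-winner procedure. First I would sort the $m$ items in increasing order of the ratio $v_{1,j}/v_{2,j}$ (breaking ties arbitrarily, placing items with $v_{1,j}=0$ first and items with $v_{2,j}=0$ last); by the bitonic sorting primitive this costs $O(\log^2 m)$ time with $O(m)$ processors. Relabel so that $v_{1,1}/v_{2,1}\le\cdots\le v_{1,m}/v_{2,m}$, and for a cut $k\in\{0,\dots,m\}$ write $P_k=\{1,\dots,k\}$ for the prefix and $S_k=\{k+1,\dots,m\}$ for the suffix. I would restrict attention to the $m+1$ allocations $A_k$ that give $P_k$ to agent $2$ and $S_k$ to agent $1$, and first establish that each $A_k$ is fPO. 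This follows from a fractional-swap argument: if an allocation gives agent $1$ an item $g$ and agent $2$ an item $h$ with $v_{1,g}/v_{2,g}<v_{1,h}/v_{2,h}$, then transferring a fraction $a$ of $g$ to agent $2$ and a fraction $b$ of $h$ to agent $1$ with $v_{1,g}/v_{1,h}\le b/a\le v_{2,g}/v_{2,h}$ strictly increases both utilities, so the allocation is not fPO; conversely, in $A_k$ every item held by agent $1$ has ratio at least that of every item held by agent $2$, so no improving swap exists and a standard supporting-price argument certifies fPO.

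Next I would locate a cut $k$ for which $A_k$ is additionally EF1, using the result of Oh et al.~\cite{OhArielWarut2019ManyGoodsFewQueries}: for this fixed ordering their cut-and-choose procedure finds, with $O(\log m)$ value queries, a cut at which letting agent $2$ take her preferred piece yields an EF1 allocation. The subtlety — and the step I expect to be the main obstacle — is that cut-and-choose lets agent $2$ choose her favorite piece, which may be the suffix, whereas fPO forces agent $1$ to receive the (high-ratio) suffix; so I need the key lemma that at Oh et al.'s EF1 cut the fPO-direction assignment ($\text{agent }1\leftarrow S_k$, $\text{agent }2\leftarrow P_k$) is \emph{also} EF1. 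The handle is ratio-monotonicity: writing $\rho$ for the threshold ratio at the cut, every prefix item has ratio $\le\rho$ and every suffix item ratio $\ge\rho$, so $v_2(P_k)\ge v_1(P_k)/\rho$ and $v_2(S_k)\le v_1(S_k)/\rho$, which yields the implication $v_1(P_k)\ge v_1(S_k)\Rightarrow v_2(P_k)\ge v_2(S_k)$ — agent $2$ begins to prefer the prefix no later than agent $1$ does. Concretely I would track the flip points $k^\ast=\max\{k: v_1(S_k)\ge v_1(P_k)\}$ and $j^\ast=\min\{k: v_2(P_k)\ge v_2(S_k)\}$, where the implication gives $j^\ast\le k^\ast+1$. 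If $j^\ast\le k^\ast$, then at any $k\in[j^\ast,k^\ast]$ both agents prefer their assigned piece and $A_k$ is envy-free, hence EF1; otherwise $j^\ast=k^\ast+1$ and I would argue that one of the two adjacent allocations $A_{k^\ast},A_{k^\ast+1}$ is EF1, using the single-item balance guaranteed by Oh et al.'s cut to control the slack as item $k^\ast+1$ crosses the boundary. Care is also needed for ties among the ratios and for items valued $0$ by one agent, which the endpoint placement above handles.

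Finally I would assemble the parallel algorithm. Each value query $v_i(P_k)$ or $v_i(S_k)$ is a sum over a contiguous block, computable in $O(\log m)$ time with $O(m)$ processors by the parallel-sum primitive, so by \Cref{thm:k query to parallel} the $O(\log m)$ queries of the cut-finding step cost $O(\log^2 m)$ time. Verifying EF1 at a candidate cut additionally needs $\max_{g\in P_k}v_{1,g}$ and $\max_{h\in S_k}v_{2,h}$, each a reduction-operator computation in $O(\log m)$ time; since only a constant number of candidate cuts are examined, this is absorbed. Together with the $O(\log^2 m)$ sort, the total running time is $O(\log^2 m)$ with a polynomial number of processors, placing the problem in \textit{NC}. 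The crux of the whole argument remains the key lemma of the previous paragraph: showing that the EF1 guarantee — proved for cut-and-choose, where agent $2$ chooses freely — is preserved once the assignment direction is fixed to the fPO-compatible one.
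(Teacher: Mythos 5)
Your skeleton matches the paper's at a high level: sort by the ratio $v_{1,j}/v_{2,j}$, restrict attention to the $m+1$ directed splits (the adjusted-winner family), and locate an EF1 split using $O(\log m)$ rounds of parallel-sum work for $O(\log^2 m)$ total time. But the step you yourself flag as the crux --- that some split with the \emph{fixed}, fPO-compatible direction is EF1 --- is precisely where your argument has a genuine gap. Your single-crossing lemma correctly gives $j^*\le k^*+1$ and disposes of the overlap case $j^*\le k^*$ (envy-freeness there). In the remaining case $j^*=k^*+1$, however, you only assert that one of $A_{k^*},A_{k^*+1}$ is EF1 ``using the single-item balance guaranteed by Oh et al.'s cut,'' and this does not go through as stated. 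First, the guarantee of \cite{OhArielWarut2019ManyGoodsFewQueries} is for cut-and-choose, where agent $2$ takes her \emph{preferred} piece; it says nothing about the cut once the assignment direction is forced, and nothing ties their cut to $k^*$ or $k^*+1$. Second, the inequalities actually available at the crossing are one item short of EF1: at $k=k^*$, unfolding $v_2(P_{k^*+1})\ge v_2(S_{k^*+1})$ gives only $v_2(P_{k^*})\ge v_2(S_{k^*})-2v_{2,k^*+1}$, a two-item slack, whereas EF1 needs $v_2(P_{k^*})\ge v_2(S_{k^*})-v_{2,g}$ for a single $g$; symmetrically, at $k=k^*+1$ the definition of $k^*$ yields $v_1(S_{k^*+1})\ge v_1(P_{k^*})-v_{1,k^*+1}$, again one removal beyond the one EF1 allows. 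So as written, neither adjacent cut is certified, and the claim would need a separate argument (choosing the $v$-maximal item rather than the boundary item, or a different case analysis).

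The paper sidesteps this entirely, and its route is worth internalizing: it cites \cite{simina2015adjustedWinner} for the \emph{converse} of your fPO claim --- every fPO allocation is one of the directed splits in the ratio order --- and \cite{Barman2018FairAndEfficient} for the existence of an EF1 and fPO allocation; together these guarantee that \emph{some} directed split is EF1, with no need to locate it near the crossing point. It then finds such a split by binary search over the $m+1$ cuts, using the parallel EF1 verifier of Theorem~\ref{thm: checking ef1} at each probe and recursing toward the side indicated by whichever agent's envy violates EF1, for the same $O(\log^2 m)$ bound you report. One further caveat about your forward direction: ``each $A_k$ is fPO'' fails at the degenerate endpoints when zero-valued items exist (e.g., the cut $k=0$ hands agent $1$ the items with $v_{1,j}=0$ that agent $2$ values positively), so even after repairing the key lemma you would need to pre-allocate zero-ratio and infinite-ratio items or exclude the extreme cuts; your endpoint placement alone does not do this.
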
 

\begin{proof}
    Consider sorting the items in non-increasing order of the ratio $v_{1,j}/v_{2,j}$ on a line. In \cite{simina2015adjustedWinner}, it is shown that every fPO allocation is a split of the items such that agent 1 gets all the items to the left of the split and agent 2 gets all the items to the right of the split and the allocation is discrete. This leaves us with $m+1$ allocations where each allocation is a partition of the goods into left and right halves. In \cite{Barman2018FairAndEfficient} it is shown that an EF1 + fPO allocation must exist. Thus, it must be one of these $m+1$ splits. Now, we can run a binary search over the splits to find one that is EF1. Checking if an allocation is EF1 can be done in parallel by Theorem~\ref{thm: checking ef1}. After each check, we reduce the set of allocations to the correct half by checking which agent's envy violates EF1.
    
    We can sort the items by their ratios using bitonic sorting. Checking each individual split takes $O(\log m)$ time and requires $O(m)$ processors. Since there are $m+1$ allocations, running binary search over them takes $O(\log m)$ time where at each step we check if the allocation is EF1. This gives us a final time complexity of $O(\log^2 m)$, where we require $O(m)$ processors.
\end{proof}

Finally, we show that for $n$ identical and additive agents, there exists a simple \textit{NC} algorithm for finding an EF1 allocation. Notice that for identical agents, it is easy to predict what item will be allocated in the $k$-th round of the Round-Robin procedure: since all agents have the same ranking over the items, the $k$-th item allocated is precisely the $k$-th favorite item.

\begin{theorem}\label{thm: ef1 for identical}
The problem of finding an EF1 allocation for $n$ identical, additive agents is in \textit{NC}.
\end{theorem}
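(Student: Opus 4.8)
The plan is to compute, directly and in parallel, the allocation that the Round-Robin procedure (with the identity order $1,2,\dots,n$) would produce, without actually simulating its rounds one at a time. Let $v$ denote the common valuation of the identical agents. As observed just before the statement, when all agents share the same ranking over items, the item taken in the $k$-th pick of Round-Robin is exactly the $k$-th most valuable item. Hence, if we sort the items in non-increasing order of value as $g_1, g_2, \dots, g_m$ with $v(g_1) \ge v(g_2) \ge \cdots \ge v(g_m)$ (breaking ties by original index for determinism), then Round-Robin assigns $g_k$ to agent $\big((k-1)\bmod n\big)+1$. So I would simply set $X_i = \{\, g_k : k \equiv i \pmod n \,\}$ for each agent $i$.

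First I would sort the items by value using the bitonic sorting network, which takes $O(\log^2 m)$ time with $O(m)$ processors. Then, using one processor $p_k$ per item-rank $k$, I would compute the label $\big((k-1)\bmod n\big)+1$ and write the corresponding assignment; each such computation is a single arithmetic operation on $O(\log(mn))$-bit integers, so this step adds only $O(\mathrm{polylog}(mn))$ time and $O(m)$ processors. The overall cost is therefore dominated by the sort, giving an $O(\log^2 m)$-time, $O(m)$-processor algorithm, which places the problem in \textit{NC}.

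It remains to argue that the resulting allocation is EF1, which I would establish by a round-by-round pairing argument (rather than by invoking the verifier of Theorem~\ref{thm: checking ef1}). Fix two agents $i, j$. In round-robin the bundle sizes differ by at most one, and the earlier-picking agent never has fewer items. If $i < j$, then in every round $r$ agent $i$ takes the item of rank $(r-1)n + i$ and agent $j$ the item of rank $(r-1)n + j > (r-1)n + i$, so $i$'s round-$r$ item is at least as valuable; pairing the rounds (and discarding any extra items of $i$, which only help) yields $v(X_i) \ge v(X_j)$, i.e. $i$ does not envy $j$ at all. If $i > j$, let $g$ be the highest-ranked (round-$1$) item of $j$; after deleting $g$, agent $j$ has at most $|X_i|$ items left, and I would pair agent $i$'s round-$r$ item (rank $(r-1)n+i$) with agent $j$'s $r$-th remaining item (rank $rn+j$). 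Since $i \le n$ and $j \ge 1$ we have $(r-1)n + i \le rn + j$, so $i$'s item dominates in each pair, giving $v(X_i) \ge v(X_j \setminus \{g\})$. Thus EF1 holds for every ordered pair.

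The main obstacle here is conceptual rather than computational, and it is exactly what makes the result noteworthy in light of the \textit{CC}-hardness in \Cref{sec: hardness}: Round-Robin is in general inherently sequential, yet the identical-valuations assumption lets us predict each pick in closed form and thereby replace the sequential simulation with a single parallel sort. Once that shortcut is in place, both the implementation and the EF1 verification are routine.
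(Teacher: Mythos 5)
Your proposal is correct and follows essentially the same approach as the paper: sort the items once in non-increasing value with a bitonic network and assign the $k$-th ranked item to agent $\bigl((k-1)\bmod n\bigr)+1$, which reproduces the Round-Robin allocation in $O(\log^2 m)$ time with $O(m)$ processors. The only difference is cosmetic: where the paper simply notes that the output coincides with Round-Robin's allocation and is therefore EF1, you verify EF1 from scratch with a (correct) rank-pairing argument, making the proof self-contained but not changing the underlying route.
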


\begin{proof}
    Begin by sorting the items in decreasing value (breaking ties arbitrarily) and let the item with the $k$-th highest value be labeled $m_k$. Let $\sigma$ be some order over the agents and let $\sigma_i$, for $i \in [n]$, represent the agent in the $i$'th index of $\sigma$. Consider allocating any item $m_k$. If $k$ is not divisible by $n$, we allocate item $m_k$ to agent $\sigma_{k~mod~n}$. If $k$ is divisible by $n$, we allocate item $m_k$ to agent $\sigma_{n}$. This returns the same allocation as that of running Round-Robin using the order $\sigma$ and as such is an EF1 allocation. Sorting the items takes $O(\log^2 m)$ time and requires $O(m)$ processors. Then, allocating each item simultaneously takes $O(1)$ time and a total of $O(m)$ processors. In total, the time complexity is $O(\log^2 m)$ time and we require $O(m)$ processors.
\end{proof}
\section{Traditional EF1 algorithms are inherently sequential}\label{sec: hardness}

In this section, we give limits to what parallel algorithms can achieve in our setting. Specifically, we show that ``Round-Robin looking'' allocations cannot be found efficiently in parallel. We consider the following problem, which we call \textsc{Fixed-Order Round-Robin}: Given a set $\items$ of $m$ items, a set $\agents$ of $n$ agents, a strict ordering $\sigma = \{\sigma_1 \succ \dots \succ \sigma_n\}$ over the agents, and a designated agent, item pair $(i^*,j^*)$, decide if agent $i^*$ is allocated item $j^*$ by Round-Robin with $\sigma$ as the order over the agents. We give a log-space reduction from \textsc{Lexicographically-First Maximal Matching} to \textsc{Fixed-Order Round-Robin}.

\begin{theorem}\label{thm: cc hard rr given order}
\textsc{Fixed-Order Round-Robin} is \textit{CC}-Hard, even for the case of $n$ restricted additive agents, i.e. $v_{i,j} \in \{ 0, v(j) \}$, where every agent positively values at most $3$ items and every item is positively valued by at most $3$ agents.
\end{theorem}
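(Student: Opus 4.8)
The plan is to give a log-space reduction from \textsc{Lexicographically-First Maximal Matching} (\textsc{LFMM}) to \textsc{Fixed-Order Round-Robin}, since \textsc{LFMM} is known to be \textit{CC}-complete. Recall that in \textsc{LFMM} we are given a bipartite graph with a fixed ordering of the edges, and we greedily add edges to a matching in order, skipping any edge that conflicts with a previously chosen edge; the decision version asks whether a designated edge is in the resulting matching. The key conceptual observation is that the Round-Robin procedure is itself a greedy process: in each turn, the current agent irrevocably grabs their most-valued available item, and an item, once taken, is unavailable thereafter. This ``commit and remove'' dynamic is exactly what makes \textsc{LFMM} \textit{CC}-complete, so the main design task is to encode a matching instance into valuations so that the two greedy processes track each other step for step.

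First I would set up the correspondence between the two instances. The natural encoding maps each edge of the \textsc{LFMM} graph to an item and uses the agent-side vertices of the bipartite graph as the agents; an agent positively values precisely the items (edges) incident to its vertex, with the inherent values $v(j)$ chosen to reflect the edge ordering so that, when an agent takes a turn, the item it most-values among those still available is the available incident edge that comes earliest in the \textsc{LFMM} order. The agent ordering $\sigma$ and the inherent values must be jointly tuned so that the sequence of (agent, item) commitments produced by Round-Robin replays the sequence of edges examined and selected by the greedy matching procedure. Under such an encoding, a designated edge $e = (i^*, j^*)$ belongs to the lexicographically-first maximal matching if and only if the corresponding agent $i^*$ is allocated the corresponding item $j^*$ by Round-Robin, which is exactly the \textsc{Fixed-Order Round-Robin} query. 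I would also verify that the whole construction is computable in logarithmic space, which should follow because each item's valuation data depends only locally on the incidence and position information of its edge.

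The hard part, and the reason the theorem's sharp degree bound is nontrivial, will be to realize this encoding while keeping every agent positively valuing at most $3$ items and every item positively valued by at most $3$ agents. A naive reduction would let an item (edge) be valued by both of its endpoints and let a high-degree vertex value many incident edges, immediately blowing past the degree-$3$ restriction; moreover, forcing Round-Robin's round structure to align with \textsc{LFMM}'s edge-by-edge order may require auxiliary ``padding'' or ``gadget'' items and agents to control \emph{when} a given agent gets its turn relative to when its incident edges should be examined. The main obstacle is therefore to design local gadgets that (i) simulate the examination of a single edge by a single agent in a single round, (ii) propagate the ``this edge/vertex is already used'' information through the removal of items, and (iii) do all of this with bounded degree, so that high-degree vertices in the original graph are simulated by chains or trees of bounded-degree gadget agents. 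I expect the bulk of the argument to be the careful verification that these gadgets compose correctly, that the relative timing of rounds is preserved, and that the designated pair is allocated exactly when the designated edge is matched.
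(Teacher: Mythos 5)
Your high-level instinct---reduce from \textsc{LFMM} by making Round-Robin replay the greedy matching process---is the right one and matches the paper, but the proposal has genuine gaps, and its one concrete design choice is the wrong one. First, the encoding you sketch (items $=$ edges of the \textsc{LFMM} graph, agents $=$ left vertices) does not enforce the matching constraint on the right side: when agent $x_i$ takes the item corresponding to edge $(x_i,y_j)$, the items corresponding to the \emph{other} edges incident to $y_j$ remain available, so nothing stops a later agent from ``matching'' to the same $y_j$. You acknowledge this by invoking gadgets that propagate ``this vertex is used'' information, but those gadgets---which are the entire content of the reduction under your encoding---are never constructed, and tuning ``padding'' agents so that Round-Robin's rigid turn order aligns with the greedy edge order is exactly the step you leave open. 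The paper avoids all of this with the other natural encoding: agents $=$ vertices of $X$, items $=$ vertices of $Y$ (not edges), with $v_{i,j}=m-j+1$ whenever $(x_i,y_j)\in E$ and $v_{i,j}=0$ otherwise, and $\sigma$ equal to the order $x_1,\dots,x_n$. Then an item taken by one agent is automatically unavailable to everyone, smaller index on the $Y$ side means strictly larger value for \emph{every} agent who values the item (which also makes the instance restricted additive, with inherent value $v(j)=m-j+1$), and a single induction over rounds shows that round $k$ of Round-Robin reproduces exactly the $k$-th step of building $M_{lex}$.

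Second, you misjudge where the degree-$3$ bound comes from. You treat it as the ``hard part,'' requiring chains or trees of bounded-degree gadget agents to simulate high-degree vertices; in the paper it is free: one reduces from \textsc{3-LFMM}, the variant in which every vertex of the bipartite graph has degree at most $3$, which is already known to be \textit{CC}-complete (L\^{e}, Cook, and Ye, 2011). Under the vertex-based encoding, ``agent values at most $3$ items'' and ``item is valued by at most $3$ agents'' are then literally the degree bounds of the source instance, with no gadgets at all. So the proposal, as written, is a plan rather than a proof: the construction it commits to would fail without substantial additional machinery, and the machinery it defers to is unnecessary once the right source problem and the right item/vertex correspondence are chosen.
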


\begin{proof}
We reduce the \textsc{3-Lexicographically-First Maximal Matching} (3-LFMM) problem to \textsc{Fixed-Order Round-Robin}. In the LFMM problem, we are given a bipartite graph $G = (X, Y, E)$ where $X = \{x_i\}^{n}_{i=1}$, $Y = \{y_i\}^{m}_{i=1}$, and $E \subseteq X \times Y$. The lexicographically first maximal matching of $G$, $M_{lex}$, is produced by successively matching vertices in $X$, in the order $x_1, \dots, x_n$, each one with the available vertex in $Y$ that has the smallest index. The LFMM problem is to decide if a designated edge belongs to the lexicographically first maximal matching of a bipartite graph $G$. In the 3-LFMM problem, each vertex in $G$ has degree at most $3$. \cite{le2011formal} prove that 3-LFMM is \textit{CC}-complete.

Let $G = (X, Y, E)$ with a designated edge $e^*$ be an instance of the 3-LFMM problem. Without loss of generality, let $|X| \geq |Y|$. We construct an instance of \textsc{Fixed-Order Round-Robin} as follows. For each vertex $x_i \in X$ we create an agent, and for each vertex $y_j \in Y$ we create an item. For each $e=(x_i,y_j) \in E$, we set $v_{i,j} = m-j+1$. For $e=(x_i,y_j) \notin E$, $v_{i,j} = 0$. By construction, since each vertex in $G$ has degree at most $3$, each agent values positively at most $3$ items, and each item is valued positively by at most $3$ agents. Let the ordering of the vertices in $X$ correspond to $\sigma$, i.e. $\sigma_i = i$. Notice that this construction takes logarithmic space. Therefore, to conclude the proof of Theorem~\ref{thm: cc hard rr given order}, it suffices to show that $e^* = (x_{i^*},y_{j^*}) \in M_{lex}$ if and only if agent $i^*$ gets item $j^*$ in the execution of Round-Robin that corresponds to $\sigma$. We prove a stronger statement, using induction. 

Our inductive hypothesis is that, for a given number $k$, for any $j \in [m]$, $(x_{k},y_{j}) \in M_{lex}$ if and only if agent $k$ gets item $j$ in the $k$-th round of the execution of Round-Robin that corresponds to $\sigma$. For $k = 1$, we have that $(x_{1},y_{j}) \in M_{lex}$ if and only if $j = argmin_{\ell \in [m]} \{ (x_1,y_\ell) \in E \}$, which, by construction, happens if and only if $v_{1,j} > v_{1,\ell}$ for all $\ell \in [m]$, i.e., if and only if agent $1$ picks item $j$ in the execution of Round Robin, noting that agent $1$ is first in $\sigma$ and that agents don't pick items they have zero value for. Assume the hypothesis is true for numbers less than or equal to $k$, and that $(x_{k+1},y_{j}) \in M_{lex}$. By the inductive hypothesis, all edges $(x_{i},y_{\ell}) \in M_{lex}$ for $i \leq k$ correspond to items allocated in the first $k$ rounds in the execution of Round-Robin. $(x_{k+1},y_{j}) \in M_{lex}$ if and only if $j$ is the smallest index among all unmatched neighbors of $x_{k+1}$ at the $(k+1)$-st step of building the lexicographically first maximal matching. Since smaller indices (of edges) correspond to strictly higher valuations, we have that, by construction, $(x_{k+1},y_{j}) \in M_{lex}$ if and only if $v_{k+1,j} > v_{k+1,\ell}$ for all items $\ell \in [m]$ that have not been allocated in the first $k$ rounds in the execution of Round-Robin. This holds if and only if $j$ is the item selected by agent $k+1$ in the $(k+1)$-st round of Round-Robin (noting once again that, in Round-Robin, agents don't pick items with zero value for them).
\end{proof}
\section{EF1 + PO for restricted additive with a bounded number of values}\label{sec: matching-revised}


In this section, we present a new randomized parallel algorithm that gives an EF1 and PO allocation for assigning $m$ indivisible items to $n$ agents with \textit{restricted additive} valuations. Recall that a valuation function $v_i$ is \emph{restricted additive} if $v_i$ is additive, and for each item $g \in \items$, $v_i(g) \in \{0, v(g)\}$. The complexity of the algorithm is parameterized by $t$, the number of ``inherent'' item values, i.e. the number of different values $v(g)$ can take. Formally, our parallel algorithm has \textit{polylog(m, n)} time complexity and requires \textit{poly}$(m, n) \cdot O(m^t)$ processors.

Here, we describe how our algorithm works. We construct a weighted bipartite graph $G$ where on one side of the graph, we have vertices corresponding to items, and on the other side, we have vertices corresponding to copies of agents. We ensure that the two sides have the same number of vertices by adding $mn-m$ dummy items that all agents have zero value for. We first describe the vertices representing the set of items. Let this side be $A$. To populate $A$, we create a vertex $a_j$ for each $j \in \items$. We will think of $A$ as partitioned in buckets $\items_1 \dots \items_t$, where $t$ is the number of different item values. $M_i$ is the set of items with the $i$'th highest value. Finally, we add vertices that correspond to dummy items. Let the set of dummy vertices be $\items_d$. On the other side of the bipartition, we have vertices corresponding to copies of agents. Let this side be $B$. We create $m$ buckets of $n$ vertices where each of these $n$ vertices represents an agent. Formally, we create a set of vertices $\{b_{(1,j)}, b_{(2,j)}, \dots, b_{(n,j)}\}$ for $j \in [m]$. The $c$-th bucket will be called $\agents_c$. For each $j \in \items_f$ and $i \in \agents$, if $v_{i,j} > 0$, we add, for all $c \in [m]$, the edge $(a_j, b_{(i,c)})$ with weight $-m^{(t-f)}\cdot c$. For each dummy item $j \in \items_d$ and $i \in [n]$, we add, for all $c \in [m]$, the edge $(a_j, b_{(i,c)})$ with weight 0. We refer to this weight function as $w(\cdot)$. We give an example of the weighted bipartite graph in Figure~\ref{fig: rest-add-graph} where there are three agents, and three items in buckets $\items_1$ and $\items_f$ along with some dummy vertices in $\items_d$.

\begin{figure}[ht]
\centering
\includegraphics[width=1\columnwidth]{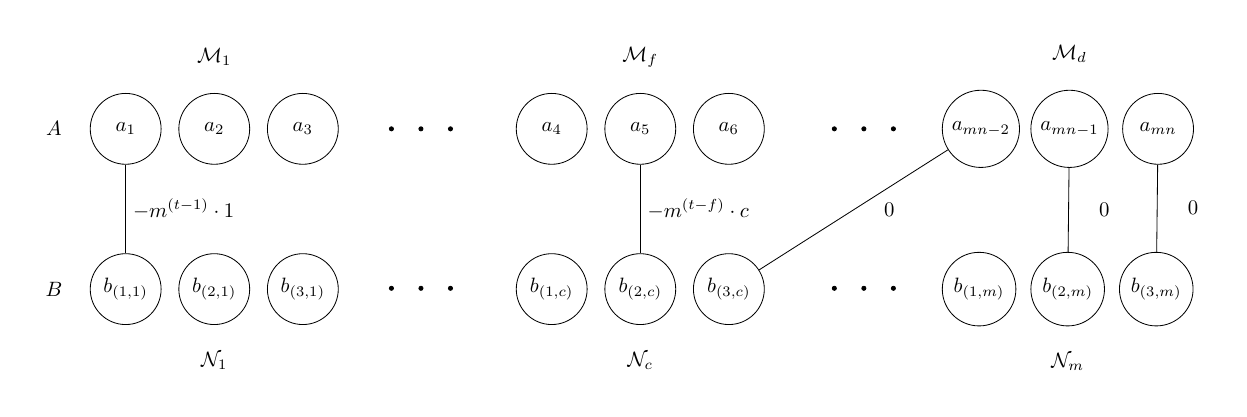} 
\caption{$G$ for an instance with three agents.}
\label{fig: rest-add-graph}
\end{figure}

Once the graph $G = (X \cup Y, E, w)$ is constructed, we compute a maximum-weight perfect matching $M^{*}$ and return the allocation corresponding to $M^*$. We assume that every (non-dummy) item is valued by \textit{someone}. This is without loss of generality since, if an item is not valued by anyone, this can be checked efficiently in parallel, and the item can be discarded. The formal description of the algorithm is given in Algorithm~\ref{alg:rncEF1-RA}. We prove that this algorithm always outputs an EF1 and PO allocation. 

\begin{algorithm}[ht]
\caption{Parallel Algorithm for Division of Goods with Restricted Additive Values}\label{alg:rncEF1-RA}
\begin{algorithmic}[1]
\Require $n$ agents, $m$ items, $v_{i}(g) \in [0, v(g)]~\forall i \in \agents, g \in \items$
\Ensure EF1 + PO Allocation $X$
\State Sort the items into buckets by value $\items_1, \items_2, \dots, \items_t$
\State Add a set of $mn-m$ dummy items $\items_d$ to $\items$
\State Create an empty, weighted, undirected graph $G$
\ForAll{$j \in \items$ in parallel}
    \State Add a vertex $a_j$  
\EndFor
\ForAll{$j \in \items$, $i \in \agents$ in parallel}
   \State Add a vertex $b_{(i,j)}$
\EndFor
\ForAll{$j \in \items$, $b_{(i,c)}$ for $i \in \agents$ and $c \in [m]$ in parallel}
    \If{$(v_{i,j} > 0)$}
        \State Add the edge $(a_j, b_{(i,c)})$ to $G$ with weight $-m^{(t-f)} \cdot c$ if item $j \in \items_f$
    \EndIf
    \If{$(a_j \in \items_d)$}
        \State Add the edge $(a_j, b_{(i,c)})$ to $G$ with weight $0$
    \EndIf
\EndFor
\State Compute the Max-Weight Perfect Matching $M^*$ in $G$
\ForAll{$j \in \items$, $b_{(i,c)}$ for $i \in \agents$ and $c \in [m]$ in parallel}
     \If{$((a_j, b_{(i,c)}) \in M^*)$}
        \State $X_i = X_i \cup j$
    \EndIf
\EndFor
\State \textbf{return:} $X$
\end{algorithmic}
\end{algorithm}

The following lemma shows that this algorithm satisfies Pareto Optimality. 

\begin{lemma}\label{PO - RA}
Algorithm~\ref{alg:rncEF1-RA} outputs a Pareto Optimal allocation. 
\end{lemma}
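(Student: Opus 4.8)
The plan is to reduce Pareto optimality to utilitarian welfare maximization. The key observation I would use is that any allocation maximizing the utilitarian social welfare $\mathrm{SW}(X) := \sum_{i \in \agents} v_i(X_i)$ is automatically Pareto optimal: if some integral allocation $Y$ Pareto-dominated $X$, then summing the defining inequalities $v_i(Y_i) \ge v_i(X_i)$ over all $i$ (with at least one strict) would yield $\mathrm{SW}(Y) > \mathrm{SW}(X)$, contradicting maximality. So it suffices to prove that the allocation returned by Algorithm~\ref{alg:rncEF1-RA} maximizes social welfare.

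To show welfare maximality, I would first record the trivial upper bound available for restricted additive valuations: since $v_{i,j} \in \{0, v(j)\}$, every (real) item $j$ contributes at most $v(j)$ to the social welfare of any integral allocation, so $\mathrm{SW}(Y) \le \sum_{j \in \items} v(j)$ for every integral allocation $Y$. It then remains to argue that the output allocation attains this bound, i.e. that every real item is handed to an agent who values it positively.

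This last point follows directly from the \emph{edge construction}, and notably not from the edge weights. By construction, the item vertex $a_j$ of a real item $j$ is connected only to copies $b_{(i,c)}$ of agents $i$ with $v_{i,j} > 0$, and has no edge to any copy of an agent with $v_{i,j} = 0$. Hence in the perfect matching $M^*$ — indeed in \emph{any} perfect matching of $G$ — the vertex $a_j$ is matched to a copy $b_{(i,c)}$ of some agent $i$ with $v_{i,j} > 0$, so item $j$ is allocated to an agent deriving value $v(j)$ from it. Summing over all real items gives $\mathrm{SW}(X) = \sum_{j \in \items} v(j)$, which meets the upper bound; thus $X$ is welfare-maximizing and therefore Pareto optimal.

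The only step requiring genuine care is that this reasoning presupposes that $M^*$ is a bona fide perfect matching, so that every $a_j$ is matched at all. I would dispatch this by noting that a perfect matching exists: each agent owns $m$ copies on side $B$, so under the stated (without-loss-of-generality) assumption that every real item is positively valued by someone, the $m$ real items can always be routed to valuing copies while the $mn-m$ dummy items (adjacent to all copies with weight $0$) absorb the remaining agent-copies. I would also emphasize that the particular weights $-m^{(t-f)}\cdot c$ are irrelevant for this lemma — Pareto optimality holds for the allocation induced by any perfect matching of $G$, and the weight structure is needed only later to secure EF1.
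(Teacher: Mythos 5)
Your proof is correct and follows essentially the same route as the paper's: establish that a perfect matching of $G$ exists, observe that every real item vertex is adjacent only to copies of agents who positively value it, and conclude that the output allocates each item to a valuing agent, which is Pareto optimal under restricted additive valuations. The differences are minor: the paper verifies existence of the perfect matching via Hall's theorem where you route items to agent copies directly, and you make explicit, via the utilitarian-welfare upper bound $\sum_{j \in \items} v(j)$, the final implication (``every item to a valuing agent implies PO'') that the paper merely asserts.
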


\begin{proof}
We show that the resulting maximum-weight matching saturates the left side of the bipartition. As a result, all items are allocated to agents that value those items since an edge in the graph is only present between an item-agent pair when the agent values that item.
    
We show this by using Hall's Marriage Theorem. Hall's Theorem characterizes necessary and sufficient conditions for a bipartite graph to have a perfect matching. Recall Hall's Theorem:

\begin{theorem}[Hall's Theorem \cite{HallsThm1935}]
A bipartite graph $G = (L \cup R, E)$ contains an $L$-saturating perfect matching if and only if for every subset $W$ of $L$, its neighborhood, $N_G(W)$, satisfies \[|N_G(W)| \geq |W|\].
\end{theorem}

This holds for the graph $G$ used in Algorithm \ref{alg:rncEF1-RA}. Consider any subset $W$ of $A$. Since $W$ is comprised of vertices corresponding to non-dummy items and vertices corresponding to dummy items, it suffices to show that a vertex of either type has a large enough neighborhood in $B$. Every item $j \in \items$ is associated with a vertex $a_j \in A$. A vertex $a_j$ corresponding to the non-dummy item $j$ has at least $m$ edges coming out of it: an edge to the same agent $i$ in each of the $m$ blocks. Any vertex corresponding to a dummy item is connected to all vertices in $B$. Thus, any subset $W$ of $A$ will result in a neighborhood of size at least $|W|$ in $B$. So, $G$ will always contain at least one perfect matching that saturates $A$. Since our allocation corresponds to this matching, each item is given to an agent that values it. In the restricted additive setting, this corresponds to a Pareto Optimal allocation.
\end{proof}

The next lemma is crucial for showing the EF1 guarantee of Algorithm~\ref{alg:rncEF1-RA}.

\begin{lemma}\label{prefer earlier bucket}
    For any two agents $i$ and $j$, and $c \in [m-1]$, $i$ weakly prefers the item matched to her in bucket $\agents_c$ to the item that is matched to $j$ in bucket $\agents_{c+1}$.
\end{lemma}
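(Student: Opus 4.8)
The plan is to first establish a \emph{monotonicity} property of the maximum-weight perfect matching $M^{*}$ and then leverage it in an exchange argument. For the monotonicity property I would show that for a fixed agent $i$ and indices $c_1 < c_2$, the inherent value of the item matched to $b_{(i,c_1)}$ is at least that of the item matched to $b_{(i,c_2)}$. This follows from a single swap: any item matched to a copy of $i$ is either a dummy or is positively valued by $i$, so it has an edge to \emph{every} copy of $i$; hence exchanging the items on $b_{(i,c_1)}$ and $b_{(i,c_2)}$ produces another perfect matching. Writing the relevant edge weights as $-m^{t-f}\cdot c$ (with the convention that a dummy contributes factor $0$), the weight change of this swap is $(c_2-c_1)(m^{t-f'}-m^{t-f})$, where $f,f'$ are the buckets of the two items. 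Maximality of $M^{*}$ forces this to be $\le 0$, so the higher-value (smaller bucket index) item sits at the smaller $c$. In particular, the real items matched to $i$ occupy a prefix $b_{(i,1)},\dots,b_{(i,r_i)}$ of $i$'s copies, and all later copies receive dummies.

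For the lemma itself, let $g$ and $h$ be the items matched to $b_{(i,c)}$ and $b_{(j,c+1)}$. If $i=j$ the statement is exactly the monotonicity property, and if $v_i(h)=0$ it is trivial, so assume $i\neq j$ and $v_i(h)=v(h)>0$; then $h$ is a real item positively valued by $i$ (so the edge $(a_h,b_{(i,c)})$ exists) and, being matched to a copy of $j$, also positively valued by $j$. Suppose toward a contradiction that $v_i(g)<v(h)$. Because $h$ is a real item matched to an agent other than $i$, at least one real item is not matched to $i$, so $r_i\le m-1$ and agent $i$ has a dummy among its copies; by the monotonicity property the first such dummy occurs at index $r_i+1$, and since $g$ is real we have $c\le r_i$, so $r_i+1\le m$ is a valid index strictly larger than $c$.

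The main step is to exhibit a strictly weight-improving alternating cycle, and this is the part I expect to be the real obstacle: the naive two-swap that moves $h$ onto $b_{(i,c)}$ and $g$ onto the vacated $b_{(j,c+1)}$ may be \emph{infeasible}, since agent $j$ need not value $g$. I would instead reroute through $i$'s own copies: move $h$ from $b_{(j,c+1)}$ to $b_{(i,c)}$, shift each real item currently on $b_{(i,c')}$ (for $c\le c'\le r_i$) one step to $b_{(i,c'+1)}$, and finally send the dummy freed at $b_{(i,r_i+1)}$ to $b_{(j,c+1)}$ — the last move being legal precisely because dummies connect to every copy. Every edge used exists, so the result is again a perfect matching, and its weight exceeds that of $M^{*}$ by $m^{t-f_h}-\sum_{c'=c}^{r_i} m^{t-f_{c'}}$, where $f_h$ and $f_{c'}$ are the buckets of $h$ and of the shifted items. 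By monotonicity each shifted item has inherent value $\le v(g)<v(h)$, hence $f_{c'}\ge f_h+1$ and $m^{t-f_{c'}}\le m^{t-f_h}/m$; since there are at most $r_i-c+1\le m-1$ shifted items, the sum is strictly less than $m^{t-f_h}$ and the net change is strictly positive. This contradicts the maximality of $M^{*}$, so $v_i(g)\ge v(h)=v_i(h)$, which is the claimed preference.
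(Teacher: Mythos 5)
Your proposal is essentially correct and, while it shares the paper's overall strategy (contradiction plus a weight-improving augmentation exploiting the geometric separation of the weights), the augmentation itself is genuinely different. The paper performs a single exchange: assuming agent $i$'s item $\ell$ at $b_{(i,c)}$ is worth less than the item $\ell'$ at $b_{(j,c+1)}$, it moves $\ell'$ to $b_{(i,c)}$ and parks $\ell$ on some later copy $b_{(i,p)}$, $p>c$, that is ``unmatched'' (i.e., holds a dummy), then bounds the net change crudely via $c-p\ge 1-m$. You instead shift all of $i$'s real items from position $c$ onward down by one copy and recycle the freed dummy to $b_{(j,c+1)}$, bounding the total loss by $(m-1)\,m^{t-f_h-1}<m^{t-f_h}$ --- the same arithmetic heart. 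What your route buys is rigor at exactly the points the paper glosses over: you prove the monotonicity/prefix property and use it (together with $r_i\le m-1$) to establish that a dummy-holding copy of $i$ with index $>c$ actually \emph{exists}, which the paper merely asserts, and you explicitly restore a \emph{perfect} matching by sending the freed dummy to $b_{(j,c+1)}$, whereas the paper's two-edge accounting leaves $b_{(j,c+1)}$ unmatched and is only valid because the implicit dummy-recycling edges have weight zero. Your observation that the naive two-swap fails because $j$ need not value $g$ is also the correct diagnosis of why some rerouting through $i$'s own copies is unavoidable.

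There is one small unhandled case: you write ``since $g$ is real we have $c\le r_i$,'' but $g=M^*(b_{(i,c)})$ may be a dummy, which is perfectly consistent with your contradiction hypothesis $v_i(g)=0<v(h)$. In that case $r_i<c$, the shift set is empty, and ``the dummy freed at $b_{(i,r_i+1)}$'' is the wrong vertex --- that dummy is not freed unless $r_i+1=c$; the item actually displaced is $g$ itself. The fix is one line: swap $h$ and the dummy $g$ directly (the edge $(a_h,b_{(i,c)})$ exists since $v_i(h)>0$, and dummies connect to every vertex in $B$), which improves the weight by exactly $m^{t-f_h}>0$ --- i.e., your own weight formula with an empty sum, once the correct dummy is moved. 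With that case added your proof is complete; it is worth noting that the paper's proof has the same blind spot, since it assumes from the outset that $i$'s item lies in some bucket $\items_{f+h}$, i.e., is real.
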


\begin{proof}
Let vertex $j$ be matched to $M^*(j)$ in $M^*$. We want to show that 
\[v_i(M^*(b_{(i,c)})) \geq v_i(M^*(b_{(j,(c+1))})).\] 
Assume that this is not true. Then, the following holds for matching $M^*$. Agent $i$ is matched to item $\ell$ from $\items_{f+h}$ for some $h \in [t-f]$ in bucket $\agents_c$ and agent $j$ is matched to item $\ell'$ from $\items_{f}$ in bucket $\agents_{c+1}$. However, we know that agent $i$ values item $\ell'$. So the edge $(a_{\ell'}, b_{(i,c)})$ exists in $G$. We show that we can augment $M^*$ and increase its weight, thus proving that it was not the maximum weight matching in the first place; a contradiction. Towards this, consider matching item $\ell'$ to agent $i$ in bucket $\agents_c$ and matching item $\ell$ to agent $i$ in any bucket $\agents_p$ for $p > c$ where $b_{(i,p)}$ is unmatched. We show that the new matching has a higher total weight.

Notice that besides this item switch, all other edges remain the same. So, we need to show:
\begin{gather*}
w(a_{\ell'},b_{(i,c)}) + w(a_{\ell}, b_{(i,p)}) > w(a_{\ell'},b_{(j,(c+1))}) + w(a_{\ell}, b_{(i,c)})   
\end{gather*}
Expanding using the weight function, we have:
\begin{gather*}
    w(a_{\ell'},b_{(i,c)}) + w(a_{\ell}, b_{(i,p)}) = -cm^{(t-f)} - pm^{t-(f+h)}\\
    w(a_{\ell'},b_{(j,(c+1))}) + w(a_{\ell}, b_{(i,c)}) = -(c+1)m^{(t-f)} - cm^{t-(f+h)}.
\end{gather*}
Subtracting the weight of the old edges from the modified matching edges, we have: 
\begin{gather*}
    -cm^{(t-f)} - pm^{t-(f+h)} - (-(c+1)m^{(t-f)} - cm^{t-(f+h)}) = m^{(t-f)} + (c-p)m^{t-(f+h)}.
\end{gather*}
We have that $c \geq 1$ and $p \leq m$, so the smallest value that $(c-p)$ can take is $(1-m)$. We have:
\begin{align*}
    m^{(t-f)} + (c-p)m^{(f+h)} &\geq m^{(t-f)} + (1-m)m^{t-(f+h)} \\
    &> m^{(t-f)} + (-m)m^{t-(f+h)} \\
    &= m^{(t-f)} - m^{(t-f-h+1)}.
\end{align*}
The largest value that the second term can take is when $h = 1$. This gives us,
\begin{gather*}
    m^{(t-f)} - m^{(t-f-h+1)}  \geq m^{(t-f)} - m^{(t-f)} = 0.
\end{gather*}
Thus, we can strictly increase the weight of the matching; a contradiction.
\end{proof}

The repeated application of Lemma~\ref{prefer earlier bucket} gives us~\Cref{EF1 - RA}.

\begin{lemma}\label{EF1 - RA}
    Algorithm~\ref{alg:rncEF1-RA} outputs an EF1 allocation.
\end{lemma}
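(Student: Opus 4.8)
The plan is to prove EF1 by reducing the comparison of two agents' bundles to a bucket-by-bucket comparison, then invoking \Cref{prefer earlier bucket} repeatedly. First I would observe that the perfect matching $M^*$ assigns to each agent $i$ exactly one item (possibly a dummy) from each of the $m$ buckets $\agents_1, \dots, \agents_m$, since each bucket $\agents_c$ contains exactly one copy $b_{(i,c)}$ of every agent $i$, and by \Cref{PO - RA} the matching saturates the item side. Thus I can write agent $i$'s bundle as $X_i = \{g_i^1, g_i^2, \dots, g_i^m\}$, where $g_i^c$ denotes the item matched to $b_{(i,c)}$ (ignoring dummy items, which contribute zero value). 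The key structural fact is that \Cref{prefer earlier bucket} lets me line up the buckets of two agents with an offset of one: for any two agents $i, j$ and any $c \in [m-1]$, we have $v_i(g_i^c) \geq v_i(g_j^{c+1})$.

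Next I would carry out the telescoping comparison. Fix two agents $i$ and $j$; I want to exhibit a single good $g \in X_j$ whose removal eliminates $i$'s envy, i.e. $v_i(X_i) \geq v_i(X_j \setminus \{g\})$. The natural candidate is the item $g = g_j^1$ that $j$ receives in the very first bucket $\agents_1$. Then
\begin{align*}
v_i(X_j \setminus \{g_j^1\}) &= \sum_{c=2}^{m} v_i(g_j^c) = \sum_{c=1}^{m-1} v_i(g_j^{c+1}) \\
&\leq \sum_{c=1}^{m-1} v_i(g_i^c) \leq \sum_{c=1}^{m} v_i(g_i^c) = v_i(X_i),
\end{align*}
where the first inequality is a term-by-term application of \Cref{prefer earlier bucket} and the second holds because valuations are non-negative. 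This establishes the EF1 condition for the ordered pair $(i,j)$, and since $i$ and $j$ were arbitrary, the allocation is EF1.

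The main thing to be careful about — though I expect it to be routine rather than a genuine obstacle — is verifying that the bundle really does decompose as exactly one item per bucket so that the index shift in the sum is legitimate. This requires that the matching is perfect and that each agent has exactly one copy per bucket, both of which follow from the construction and \Cref{PO - RA}; dummy items must be handled by noting they contribute $0$, so dropping them from the sums is valid. One subtlety worth stating explicitly is that \Cref{prefer earlier bucket} compares the item in $i$'s bucket $\agents_c$ against the item in $j$'s bucket $\agents_{c+1}$, which is precisely the off-by-one alignment that makes the telescoping work and that forces us to ``sacrifice'' the single good $g_j^1$ from $j$'s earliest bucket. Assembling these observations yields the EF1 guarantee, completing the proof of \Cref{EF1 - RA}.
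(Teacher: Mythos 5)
Your proof is correct and takes essentially the same approach as the paper's: decompose each agent's bundle into one item (possibly dummy) per bucket, apply \Cref{prefer earlier bucket} term by term with the off-by-one alignment, and conclude that removing the item $j$ receives in bucket $\agents_1$ eliminates $i$'s envy. Your write-up simply makes explicit the telescoping sums and the dummy-item and perfect-matching bookkeeping that the paper's proof states informally.
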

\begin{proof}
Notice that every vertex in $B$ is matched to some item in $A$ (the matched item may be a dummy item of value 0). By Lemma~\ref{prefer earlier bucket}, for any two agents $i$ and $j$, we have that $v_i(M^*(b_{(i,c)})) \geq v_i(M^*(b_{(j,(c+1))}))$. So, in particular, we have that agent $i$ weakly prefers the item they received in bucket $\agents_1$ to the item agent $j$ receives in  bucket $\agents_2$. Agent $i$ also weakly prefers the item they received in bucket $\agents_2$ to the item agent $j$ receives in bucket $\agents_3$ and so on. As a result, we know that agent $i$ has at least the same value for the set of items she receives in buckets $\agents_1$ through bucket $\agents_m$ as that of the set of items agent $j$ receives in buckets $\agents_2$ through bucket $\agents_m$. Thus, by removing the item agent $j$ receives in $\agents_1$, agent $i$ will certainly have no envy for agent $j$.
\end{proof}

Finally, we show that Algorithm~\ref{alg:rncEF1-RA} runs in randomized polylogarithmic time using $f(m,n) \cdot O(m^t)$  processors where $f$ is a polynomial (in $m$ and $n$) function. 

\begin{lemma}\label{efficient-RA}
Algorithm \ref{alg:rncEF1-RA} takes $O(\log^2(mn))$ time using $O(m^{5.5+t}n^{5.5})$ processors.
\end{lemma}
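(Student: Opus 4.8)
The plan is to bound each phase of Algorithm~\ref{alg:rncEF1-RA} separately and argue that the maximum-weight perfect matching computation dominates both the time and the processor budget. First I would dispatch the inexpensive phases. Sorting the $m$ items into the buckets $\items_1,\dots,\items_t$ by value can be done with a bitonic sort in $O(\log^2 m)$ time using $O(m)$ processors. Introducing the $mn-m$ dummy items and instantiating the $2mn$ vertices of $G$ is trivially parallelizable in $O(1)$ time once $O(mn)$ processors are available. For the edge set, note that $G$ has $O(m^2n^2)$ edges (dominated by the dummy vertices, each joined to all of $B$), and each weight $-m^{(t-f)}\cdot c$ is obtained by first precomputing the powers $m^0,\dots,m^{t-1}$ by repeated squaring and then doing one multiplication; assigning one processor per potential edge populates all weights simultaneously. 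Finally, reading off the allocation $X$ from $M^{*}$ is an $O(1)$-time scan using $O(m^2 n)$ processors. Each of these phases is comfortably within the claimed $O(\log^2(mn))$ time and $O(m^{5.5+t}n^{5.5})$ processor bounds, so they only need to be shown not to dominate.

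The heart of the argument is the matching step, for which I would invoke the parallel minimum-weight perfect matching algorithm of~\cite{NCMatching1987MulmuleyVaziVazi}, applied to the negated weights so that a minimum-weight matching yields the maximum-weight matching $M^{*}$ that the algorithm requires; Lemma~\ref{PO - RA} already guarantees that a perfect matching exists, so the subroutine is well-posed. The decisive point — and the step I expect to be the main obstacle — is to certify that the edge weights are \emph{polynomially} bounded, since the algorithm of~\cite{NCMatching1987MulmuleyVaziVazi} is efficient only in that regime. This is exactly where the restricted-additive structure with only $t$ inherent values pays off: every nonzero weight has the form $-m^{(t-f)}\cdot c$ with $f\in[t]$ and $c\in[m]$, so its magnitude is at most $m^{(t-1)}\cdot m=m^{t}$. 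Hence the largest weight $W$ satisfies $W\le m^{t}$, which is polynomial in $m$ whenever $t$ is constant. (This is the quantitative analogue of the remark in the introduction that, without a bound on the weight growth, the matching problem itself becomes \textit{CC}-hard.)

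With $N=O(mn)$ vertices and maximum weight $W\le m^{t}$, it remains to substitute these two parameters into the resource bounds of~\cite{NCMatching1987MulmuleyVaziVazi}. The matching is recovered by encoding the weights into the entries of an $N\times N$ matrix and computing its determinant together with the associated adjugate via parallel linear algebra; the running time is $O(\log^2(NW))$, and since $\log(NW)=\log\!\big(mn\cdot m^{t}\big)=O(\log(mn))$ for constant $t$, this collapses to the claimed $O(\log^2(mn))$. The weight bound $W=m^{t}$ is precisely the factor that surfaces in the processor count: running the $O(N)$-many interlocking determinant/adjugate evaluations over entries whose size scales with $W$, on top of the parallel linear-algebra cost for an $N\times N$ matrix, produces $O\!\big((mn)^{5.5}\cdot m^{t}\big)=O(m^{5.5+t}n^{5.5})$ processors after substitution. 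I would treat the exact exponent as routine bookkeeping; the two things genuinely worth verifying are (i) that the polylogarithmic-depth integer arithmetic on the enlarged matrix entries keeps the depth at $O(\log^2(mn))$, again using $\log W=O(\log m)$, and (ii) that this matching bound dominates every auxiliary phase tallied in the first paragraph, which it does.
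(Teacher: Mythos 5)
Your proposal is correct and follows essentially the same route as the paper's proof: bound each phase separately, observe that all graph-construction and extraction steps are cheap, and let the matching step dominate by invoking the algorithm of~\cite{NCMatching1987MulmuleyVaziVazi} on negated weights with the key observation that the restricted-additive structure bounds the heaviest weight by $W \le m^t$, yielding $O(\log^2(mn))$ time and $O(m^{5.5+t}n^{5.5})$ processors. Your additional remarks (that Lemma~\ref{PO - RA} certifies a perfect matching exists, and that $\log(NW)=O(\log(mn))$ for constant $t$) are slightly more careful than the paper's presentation but do not change the argument.
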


\begin{proof}
We will show that each step of Algorithm \ref{alg:rncEF1-RA} runs in polylogarithmic time using at most $f(m,n) \cdot O(m^t)$ processors. In Algorithm \ref{alg:rncEF1-RA}, sorting the items takes $O(\log^2 m)$ time and $O(m)$ processors. Adding the dummy items takes $O(1)$ time using $O(mn)$ processors. The first for loop runs in $O(1)$ time using $O(mn)$ processors. The second for loop runs in $O(1)$ time using $O(mn)$ processors. The third for loop runs in $O(1)$ time using $O(m^2n)$ processors. Computing the maximum weight perfect matching is the only step in our algorithm that requires $f(m,n) \cdot O(m^t)$ processors when we have $t$ different inherent item values. From \cite{NCMatching1987MulmuleyVaziVazi}, there is a randomized parallel algorithm that computes the \emph{minimum weight perfect matching} of a graph. Notice that one can compute the maximum weight perfect matching of the graph by first negating the edge weights and then running a minimum weight perfect matching algorithm. The algorithm of~\cite{NCMatching1987MulmuleyVaziVazi} takes $O(\log^2 (mn))$ time using $O(m^{5.5}n^{5.5}W)$ processors where we have $n$ agents and $m$ items and $W$ is the weight of the heaviest edge in unary. When we have $t$ different item values, we have $W \leq m^{t}$. Finally, extracting the allocation from the maximum weight perfect matching takes $O(1)$ time using $O(m^2n)$ processors. The step with the largest time complexity is computing the maximum weight perfect matching. Thus, the total time complexity of Algorithm \ref{alg:rncEF1-RA} is $O(\log^2 (mn))$ and requires a total of $O(m^{5.5 + t}n^{5.5})$ processors. 
\end{proof}

Combined, Lemmas~\ref{PO - RA},~\ref{EF1 - RA}, and~\ref{efficient-RA} give us the following theorem. 

\begin{theorem}\label{thm:RA-final}
Algorithm \ref{alg:rncEF1-RA} is a parallel algorithm that returns an EF1 and Pareto Optimal allocation of $m$ indivisible items to $n$ agents with restricted-additive valuations, from a set of $t$ different inherent item-values, in time $O(\log^2(mn))$ using $O(m^{5.5+t}n^{5.5})$ processors.
\end{theorem}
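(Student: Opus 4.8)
The plan is to assemble the theorem directly from the three lemmas already established for Algorithm~\ref{alg:rncEF1-RA}. Each of the three claimed properties — Pareto optimality, EF1, and the stated resource bounds — is furnished by exactly one of Lemmas~\ref{PO - RA}, \ref{EF1 - RA}, and \ref{efficient-RA}, so the remaining work is only to check that their hypotheses match the setting of the theorem and that the three conclusions hold simultaneously for the single allocation the algorithm returns.

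First I would invoke Lemma~\ref{PO - RA} to conclude that the computed maximum-weight perfect matching $M^*$ saturates the item side $A$, so that every (non-dummy) item is assigned to an agent who positively values it; in the restricted-additive setting this is precisely Pareto optimality. Next I would invoke Lemma~\ref{EF1 - RA}, which via repeated application of Lemma~\ref{prefer earlier bucket} shows that for any two agents $i,j$ the bundle agent $i$ receives across buckets $\agents_1,\dots,\agents_m$ weakly dominates (for $i$) the bundle agent $j$ receives across $\agents_2,\dots,\agents_m$; dropping the single item $j$ obtains in $\agents_1$ then removes all of $i$'s envy, giving EF1. Since both lemmas are statements about the same object $M^*$ (equivalently, the same output allocation $X$), the two guarantees hold together with no interaction left to verify.

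Finally I would invoke Lemma~\ref{efficient-RA} for the complexity: sorting, dummy-item insertion, graph construction, and allocation extraction are all $O(\log^2 m)$ time with $\mathrm{poly}(m,n)$ processors, and the dominant step is the matching computation. Here the key quantitative point — and the only place the parameter $t$ enters — is that the heaviest edge weight is bounded by $m^t$, so that the MVV algorithm of~\cite{NCMatching1987MulmuleyVaziVazi}, whose processor count scales with the maximum edge weight $W$ written in unary, uses $O(m^{5.5}n^{5.5}\cdot m^t)=O(m^{5.5+t}n^{5.5})$ processors in $O(\log^2(mn))$ time.

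I expect no genuine obstacle, as the theorem is a straightforward conjunction of the three lemmas. The single point warranting care is confirming that the weight bound $W\le m^t$ used in Lemma~\ref{efficient-RA} is consistent with the weight function $-m^{(t-f)}\cdot c$ used in Lemmas~\ref{PO - RA} and \ref{EF1 - RA}: since $f\ge 1$ and $c\le m$, each edge weight has magnitude at most $m^{t-1}\cdot m=m^t$, so the bounded-weight hypothesis of the parallel matching subroutine is met, which is exactly what keeps the whole algorithm in \textit{RNC}.
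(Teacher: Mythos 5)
Your proposal is correct and matches the paper's argument exactly: the paper proves Theorem~\ref{thm:RA-final} by simply combining Lemmas~\ref{PO - RA}, \ref{EF1 - RA}, and \ref{efficient-RA}, which is precisely your decomposition. Your additional sanity check that the weight function $-m^{(t-f)}\cdot c$ with $f\ge 1$ and $c\le m$ yields $W\le m^t$ is a correct (and welcome) verification of the bound the paper asserts inside Lemma~\ref{efficient-RA}.
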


Notice that binary valuations are a special case of restricted additive valuations (with one inherent item value). Thus, we get an \textit{RNC} algorithm for binary valuations.

\begin{corollary}\label{cor:RNC-binary}
The problem of finding an EF1 and Pareto Optimal allocation for $n$ additive agents with binary valuations is in \textit{RNC}.
\end{corollary}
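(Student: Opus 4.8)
The plan is to deduce this directly from Theorem~\ref{thm:RA-final} by specializing the parameter $t$. First I would observe that binary valuations are precisely restricted additive valuations with a single inherent item value: if $v_{i,j} \in \{0,1\}$ for every agent $i$ and item $j$, then each item $g$ has inherent value $v(g) = 1$, and agent $i$ either sees this value or not. Hence there is exactly one distinct value that $v(g)$ can take, i.e. $t = 1$, and any binary instance is a legal input to Algorithm~\ref{alg:rncEF1-RA}.

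With this identification in hand, the next step is simply to invoke Theorem~\ref{thm:RA-final}, which guarantees that Algorithm~\ref{alg:rncEF1-RA} returns an EF1 and Pareto Optimal allocation. Substituting $t = 1$ into the stated bounds, the algorithm runs in time $O(\log^2(mn))$ using $O(m^{5.5+1}n^{5.5}) = O(m^{6.5}n^{5.5})$ processors. The correctness (EF1 and PO) is inherited verbatim from Lemmas~\ref{PO - RA} and~\ref{EF1 - RA}, which hold for any number of inherent values and therefore in particular for $t=1$.

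Finally, I would match these bounds against the definition of \textit{RNC}. The running time $O(\log^2(mn))$ is polylogarithmic in the input size (so this sits in \textit{RNC}$^2$), and the processor count $O(m^{6.5}n^{5.5})$ is polynomial in $m$ and $n$. The only source of randomness is the minimum-weight perfect matching subroutine of~\cite{NCMatching1987MulmuleyVaziVazi} used in the matching step, which is exactly why the guarantee is \textit{RNC} rather than \textit{NC}; every other step of the algorithm is deterministic and parallel. Together these place the problem in \textit{RNC}.

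There is no genuine obstacle in this corollary: the entire content is the recognition that binary valuations form the $t=1$ slice of the restricted additive family, so that the single inherent value keeps the heaviest edge weight bounded by $m^{t} = m$ in unary and the matching algorithm of~\cite{NCMatching1987MulmuleyVaziVazi} applies. The one point worth stating explicitly is that $W \le m^{t} = m$ here, so the polynomial processor bound of the matching primitive is not violated; everything else is a direct instantiation of the already-proved theorem.
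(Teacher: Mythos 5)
Your proposal is correct and follows exactly the paper's reasoning: the corollary is obtained by observing that binary valuations are restricted additive with a single inherent item value ($t=1$) and then specializing Theorem~\ref{thm:RA-final}. Your additional remarks on the $t=1$ processor bound and the source of randomness are accurate elaborations of the same argument.
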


We note here that for a given instance of restricted additive fair division, we can reduce the number of inherent item values at the expense of some loss in the EF1 and PO guarantees. Concretely, if we round the valuations $v_{i,j}$ to 
$v'_{i,j}$ such that $v'_{i,j} \in [ \alpha \cdot v_{i,j}, v_{i,j}]$, for an $\alpha \in [0,1)$, then an EF1 and PO allocation in $v'$ is an $\alpha$-EF1 and $\alpha$-PO allocation with respect to $v$. Assuming the item values are in the range $[1, V]$, one can use such a rounding to create $\lceil\log_{\frac{1}{\alpha}}(V+1)\rceil$ intervals.

\begin{theorem} \label{thm: c-EF1/c-PO}
Let there be $n$ restricted additive agents and $m$ indivisible items such that $v(g) \in [1, V]$ for all $g \in \items$. Then, there exists a parallel algorithm that computes an $\alpha$-EF1 and $\alpha$-PO allocation in time $O(\log^2(mn))$ using $O(m^{5.5 + \lceil\log_{\frac{1}{\alpha}}(V+1)\rceil}n^{5.5})$ processors.
\end{theorem}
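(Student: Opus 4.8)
The plan is to reduce the general bounded-value instance to the bounded-\emph{number-of-values} case already handled by Algorithm~\ref{alg:rncEF1-RA}, by first rounding the inherent item values onto a geometric grid so that only $t := \lceil\log_{\frac{1}{\alpha}}(V+1)\rceil$ distinct values survive, running Algorithm~\ref{alg:rncEF1-RA} on the rounded instance, and then invoking Theorem~\ref{thm:RA-final} with this value of $t$. All the real work is done by Theorem~\ref{thm:RA-final}; what remains is to set up the rounding and transfer the fairness/efficiency guarantees back to the original valuations.

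First I would define the rounding. For each item $g$ let $k(g) = \lfloor \log_{\frac{1}{\alpha}} v(g)\rfloor$ and set the rounded inherent value $v'(g) = (1/\alpha)^{k(g)}$, i.e.\ the left endpoint of the geometric bucket $[(1/\alpha)^k,(1/\alpha)^{k+1})$ containing $v(g)$; the rounded restricted-additive instance is $v'_{i,g} = v'(g)$ when $v_{i,g}>0$ and $v'_{i,g}=0$ otherwise. Two facts need checking. (i) $v'(g)\in[\alpha\,v(g),\,v(g)]$: from $(1/\alpha)^{k(g)}\le v(g) < (1/\alpha)^{k(g)+1}$ we get $v'(g)\le v(g)$ and $\alpha\,v(g) < (1/\alpha)^{k(g)} = v'(g)$. (ii) Since $v(g)\in[1,V]$, the exponent $k(g)$ ranges over $\{0,1,\dots,\lfloor\log_{\frac1\alpha}V\rfloor\}$, so the number of distinct rounded values is at most $\lfloor\log_{\frac1\alpha}V\rfloor+1 \le \lceil\log_{\frac1\alpha}(V+1)\rceil = t$. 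I would also record that the rounding preserves supports, $v'_{i,g}>0 \iff v_{i,g}>0$ (because $v(g)\ge 1$ forces $v'(g)>0$). Computing all the $k(g)$ and $v'(g)$ is embarrassingly parallel: one processor per item, $O(1)$ time and $O(m)$ processors, so this preprocessing does not affect the asymptotics.

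Next I would run Algorithm~\ref{alg:rncEF1-RA} on $v'$, which has at most $t$ inherent values, obtaining an allocation $X$ that is EF1 and PO with respect to $v'$. By Theorem~\ref{thm:RA-final} this takes $O(\log^2(mn))$ time and $O(m^{5.5+t}n^{5.5})$ processors, and substituting $t=\lceil\log_{\frac1\alpha}(V+1)\rceil$ gives exactly the processor bound in the statement. To transfer the guarantees back, note that additivity together with fact (i) yields $\alpha\,v_i(S)\le v'_i(S)\le v_i(S)$ for every bundle $S$. Hence EF1 under $v'$ — for each ordered pair $(i,j)$ some $g\in X_j$ has $v'_i(X_i)\ge v'_i(X_j\setminus g)$ — gives $v_i(X_i)\ge v'_i(X_i)\ge v'_i(X_j\setminus g)\ge \alpha\,v_i(X_j\setminus g)$, i.e.\ $\alpha$-EF1 under $v$. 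For Pareto optimality I would use fact (ii): in the restricted-additive setting PO is equivalent to every item being assigned to an agent who values it (this is precisely the argument behind Lemma~\ref{PO - RA}), and since supports are preserved, $X$ assigns each item to an agent with $v_{i,g}>0$, so $X$ is in fact exactly PO — and a fortiori $\alpha$-PO — with respect to $v$. This recovers the rounding observation stated just before the theorem.

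The routine parts are the two sandwich inequalities and plugging $t$ into the complexity bound. The only step requiring a little care is the counting in fact (ii): pinning down that the geometric buckets spanning $[1,V]$ number at most $\lceil\log_{\frac1\alpha}(V+1)\rceil$, and reading the statement in the nondegenerate regime $\alpha\in(0,1)$ (as $\alpha\to 0$ the base $1/\alpha\to\infty$ and the bucket count blows up, correctly reflecting that distinguishing arbitrarily small values would require many values).
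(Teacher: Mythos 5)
Your proposal is correct and follows essentially the same route as the paper: round the inherent values onto the geometric grid $1, 1/\alpha, (1/\alpha)^2, \dots$ so that at most $t = \lceil\log_{1/\alpha}(V+1)\rceil$ distinct values remain, run Algorithm~\ref{alg:rncEF1-RA} on the rounded instance, invoke Theorem~\ref{thm:RA-final} for the complexity, and transfer $\alpha$-EF1 back via the sandwich $\alpha\, v_i(S) \le v'_i(S) \le v_i(S)$ --- this matches the paper's argument step for step. The one place you diverge is the efficiency guarantee: the paper derives $\alpha$-PO directly from the sandwich (for any alternative $X'$ there is an agent $i$ with $v'_i(X'_i) \le v'_i(X_i)$, hence $\alpha\, v_i(X'_i) \le v_i(X_i)$), whereas you observe that the rounding preserves supports, so by the welfare-maximization argument behind Lemma~\ref{PO - RA} the output allocation gives every item to an agent who values it under $v$ and is therefore \emph{exactly} PO with respect to $v$, a fortiori $\alpha$-PO. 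Your version yields a strictly stronger conclusion at no extra cost; note only that your ``equivalence'' phrasing for PO in the restricted additive setting implicitly uses the standing assumption (made without loss of generality in Section~\ref{sec: matching-revised}) that every non-dummy item is positively valued by at least one agent, and that only the one implication (every item allocated to a valuer implies PO) is actually needed.
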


\begin{proof}
    We begin by rounding the valuations $v$ to new valuation functions $v'$ where $v'_{i,j} \in [\alpha \cdot v_{i,j}, v_{i,j})$ for some $\alpha \in [0,1)$. Specifically, all values in the interval $[1,1/\alpha)$ will be rounded down to $1$, values in the interval $[1/\alpha, (1/\alpha)^2)$ will be rounded down to $1/\alpha$, and so on. This creates $\lceil\log_{\frac{1}{\alpha}}(V+1)\rceil$ intervals, and therefore $t = \lceil\log_{\frac{1}{\alpha}}(V+1)\rceil$ different inherent item-values in $v'$. Using~\Cref{alg:rncEF1-RA}, we can compute an EF1 and PO allocation $X$ with respect to $v'$. We claim that $X$ is an $\alpha$-EF1 and $\alpha$-PO allocation with respect to $v$.

    First, we show the $\alpha$-EF1 guarantee. Since $X$ is EF1 with respect to $v'$, for every pair of agents $i, j$, there exists some good $g$ in agent $j$'s bundle such that (1) $v'_i(X_i) \geq v'_i(X_j \setminus \{g\})$. Since $\alpha \in [0, 1)$, we have that (2) $v_i(X_i) \geq v'_i(X_i)$. By the construction of $v'$, we also have (3) $v'_i(X_j \setminus \{g\}) \geq \alpha \cdot v_i(X_j \setminus \{g\})$. Stitching (1), (2), and (3) together, we the $\alpha$-EF1 guarantee:
    \begin{gather*}
        v_i(X_i) \geq v'_i(X_i) \geq v'_i(X_j \setminus \{g\})  \geq \alpha \cdot v_i(X_j \setminus \{g\}).
    \end{gather*}

    Next, we show the $\alpha$-PO guarantee. Consider some other allocation $X'$. Since $X$ is PO with respect to $v'$, we know, for any other allocation $X'$, $X'$ does not Pareto dominate $X$. That is, there exists at least one agent $i$ such that $v'_i(X'_i) \leq v'_i(X_i)$. By construction of $v'$, we have that, for every subset of items $S$, $\alpha \cdot v_i(S) \leq v'_i(S) \leq \ v_i(S)$. Therefore, we have:
    \begin{gather*}
       \alpha \cdot v_i(X'_i) \leq v'_i(X'_i) \leq v'_i(X_i) \leq v_i(X_i).
    \end{gather*}
    That is, agent $i$'s utility cannot be improved by a factor more than $1/\alpha$; $X$ is $\alpha$-PO.
\end{proof}
\section{Fair allocations with subsidies in parallel}\label{sec: subsidy}


In this section, we study fair division with subsidies. First, in~\Cref{subsec:envy-freeable}, we show how to adjust the algorithm of~\cite{HalpernShah2019subsidy} and compute an envy-freeable allocation and corresponding envy-eliminating payment vector in parallel. Second, in~\Cref{subsec: envy constrained}, we give an efficient parallel algorithm that computes a payment vector that not only eliminates envy but additionally satisfies other user-specified constraints (defined later in this section).

\subsection{Envy-Freeable allocations and payments in \textit{NC}}\label{subsec:envy-freeable}

We prove that the algorithm of~\cite{HalpernShah2019subsidy}, for finding an envy-freeable allocation and envy-eliminating payment vectors can be parallelized. 

First, note that the welfare-maximizing allocation, which gives each item to the agent with the highest value for it, can be shown to be envy-freeable. Now, given an envy-freeable allocation $X$, the algorithm of~\cite{HalpernShah2019subsidy} for finding envy-eliminating payments at a high-level, constructs the envy-graph $G_X$, negates all the edge weights in $G_X$, and computes all-pairs-shortest-paths on the modified $G_X$. Then, for each agent, $i$, the algorithm singles out the path with the lowest overall weight (out of $n$ shortest paths) that starts at $i$'s vertex in $G_X$. One can show that paying agent $i$ the sum of the edge weights along this path results in an envy-eliminating payment. We show that all these steps can be parallelized efficiently, noting that one can apply known techniques to solve the all-pairs-shortest-paths problem in parallel. The full proof is included for completeness.

\begin{theorem}\label{thm: envy-freeable easy}
The problem of finding an envy-freeable allocation $X$ and an envy-eliminating payment vector for $X$ for $n$ additive agents is in \textit{NC}. 
\end{theorem}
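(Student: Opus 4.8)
The plan is to parallelize the three-phase algorithm of~\cite{HalpernShah2019subsidy} and argue that each phase lives in \textit{NC}. First I would compute the welfare-maximizing allocation $X$: for each item $j$, assign it to the agent $i$ maximizing $v_{i,j}$, breaking ties by index. This is the natural ``give each item to whoever values it most'' rule, and it is trivially parallel --- for each item we run a max-reduction over the $n$ agents' values to find the winner. Using $O(mn)$ processors and the reduction primitive, this phase takes $O(\log n)$ time, and simultaneously I can record each item's owner in shared memory. I would then note (as the excerpt already asserts) that this allocation is envy-freeable, so an envy-eliminating payment vector is guaranteed to exist.

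Next I would construct the envy-graph $G_X$ with negated edge weights. For each ordered pair $(i,j)$ I assign a processor group that computes the weight $v_i(X_i) - v_i(X_j)$ (the negation of the envy-graph weight $v_i(X_j) - v_i(X_i)$). Each such weight is a difference of two bundle-values, and each bundle-value is computable by the parallel-sum primitive in $O(\log m)$ time using $O(m)$ processors, exactly as in the proofs of Theorem~\ref{thm: checking ef} and Theorem~\ref{thm: checking ef1}. There are $n^2$ edges, so with $O(n^2 m)$ processors I build the entire negated envy-graph in $O(\log m)$ time.

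The core of the algorithm is computing all-pairs shortest paths on the negated graph and then, for each agent $i$, taking the payment $q_i$ to be the \emph{smallest} (most negative) shortest-path distance among all paths emanating from $i$ --- equivalently, the maximum total envy along any path starting at $i$ in the original graph. Here I would invoke the fact stated in the preliminaries that all-pairs shortest paths can be computed efficiently and deterministically in parallel~\cite{Jaja92bookParallelAlg}; the standard repeated-squaring / min-plus matrix product approach runs in $O(\log^2 n)$ time with a polynomial number of processors. Once I have the $n \times n$ distance matrix, for each source vertex $i$ I run a min-reduction across its row of distances to extract $q_i$, costing $O(\log n)$ time with $O(n^2)$ processors total. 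The correctness that these payments eliminate envy (and are nonnegative) is exactly the guarantee of~\cite{HalpernShah2019subsidy}, which I invoke as a black box; my only obligation is to reproduce the computation in parallel.

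I expect the main subtlety --- rather than a genuine obstacle --- to be the all-pairs-shortest-paths step, both because it dominates the running time at $O(\log^2 n)$ and because one must confirm the graph contains no negative cycle (guaranteed precisely because $X$ is envy-freeable, a characterization from~\cite{HalpernShah2019subsidy}), so that shortest paths are well-defined. Assembling the pieces, every phase is in \textit{NC}: the allocation and graph construction in $O(\log(mn))$ time, and the shortest-path computation and payment extraction in $O(\log^2 n)$ time, with a polynomial number of processors overall. Hence the whole procedure is an \textit{NC} algorithm, proving the theorem.
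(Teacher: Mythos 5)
Your proposal matches the paper's proof essentially step for step: compute the welfare-maximizing allocation via per-item max-reductions, invoke the envy-freeability characterization of~\cite{HalpernShah2019subsidy}, build the negated envy-graph, run parallel all-pairs shortest paths, and extract each agent's payment with a min-reduction over its row of distances. If anything, you are slightly more careful than the paper in costing the edge-weight computation via parallel sums and in noting that envy-freeability rules out negative cycles, but the approach is the same.
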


\begin{proof}
Consider computing a welfare-maximizing allocation. A welfare-maximizing allocation is one in which the sum of utilities is maximized. This can be achieved by allocating each item to whichever agent values it the most. The characterization that welfare-maximizing allocations are envy-freeable is given in \cite{HalpernShah2019subsidy}. Finding a welfare-maximizing allocation can be done efficiently in parallel because, for each item, we can use the parallel reduction operator to find the agent with maximum value in $O(\log n)$ time. This gives us a time complexity of $O(\log n)$ using $O(mn)$ processors. 

The algorithm of~\cite{HalpernShah2019subsidy} for finding envy-eliminating payments proceeds as follows. Construct the envy-graph $G_X$ for an envy-freeable allocation $X$. Negate all the edge weights in $G_X$ and run an all-pairs-shortest-paths algorithm  on $G_X$. Let $\ell(i,j)$ be the length of the shortest path from $i$ to $j$ in $G_X$ with all the weights negated. For each agent $i$, find the vertex $j^*$ such that $\ell(i,j^*)$ is the least out of all $\ell(i,j)$ values. Set $\vec{q}_i = \ell(i, j^*)$. To see that this can be parallelized, consider each step in turn. Using $O(n^2)$ processors, we can create the envy graph and add weights (negating them first) to all the edges appropriately in $O(1)$ time. Computing all-pairs shortest paths on this graph can be done in time $O(\log^2 n)$ using $O(n^3)$ processors~\cite{Jaja92bookParallelAlg}. Finding the shortest path that starts at agent $i$ in $G_X$ can be done using a parallel reduction operator. Using $O(n^2)$ processors total, we can find the shortest path for each agent in $O(\log n)$ time. Thus, given an envy-freeable allocation $X$, the problem of finding an envy-eliminating payment vector for $X$ lies in \textit{NC}.

Combining these two steps, we can find an envy-freeable allocation $X$ and an envy-eliminating payment vector for $X$ for $n$ additive agents efficiently in parallel. 
\end{proof}

\subsection{Computing constrained envy-eliminating payment vectors in \textit{NC}}\label{subsec: envy constrained}
In this section, we give a different algorithm for finding an envy-eliminating payment vector, $\vec{q}$. A key feature of our algorithm is that it allows for additional constraints on the final solution. 

Formally, we are given an allocation $X$ of $m$ items to $n$ additive agents each with a valuation function $v_i$ that takes integer values, and a set $C$ of constraints of the form ``\textit{if agent $i$ is paid more than $x$ dollars, then agent $j$ must be paid more than $y$ dollars}.''  We are interested in computing a payment vector $\Vec{q}$ that is envy-eliminating and satisfies all such constraints in $C$, or deciding that no such vector exists. We call this problem \textsc{Constrained Payments}. We assume that no agent is paid more than $m\Delta$ dollars, where $\Delta = \max_{i,j}v_{i,j}$. This is because, for any meaningful solution, we need not pay any one agent more than $m\Delta$ dollars as this is the maximum value any agent can have for the entire set of items.

We note that many non-trivial constraints on the payment vector can be formulated as a set of these smaller individual constraints. For example, the constraint ``\textit{agent 1 should not be paid more than agent 2}'' can be imposed by adding the constraint ``\textit{if agent 1 is paid more than $x$ dollars, then agent 2 is paid more than $x$ dollars}'' for all $x \in [m\Delta]$. Or, the constraint ``\textit{agent 1 should not be paid more than 10 dollars}'' can be imposed by adding the constraint ``\textit{if agent $1$ is paid more than 10 dollars, then agent 2 is paid more than $m\Delta$ dollars}''. When $C$ is empty, we get back the original problem of finding an unconstrained envy-eliminating payment vector. Our main result for the fair division with subsidies problem is~\Cref{thm: constrained subsidies in nc} whose proof is given later in this section. 

\begin{theorem}\label{thm: constrained subsidies in nc}
If $v_{i,j}$ is integral for all $i \in [n]$ and $j \in [m]$, \textsc{Constrained Payments} can be solved in $O(\log^2 (mn\Delta))$ time using $O(n^3m^3\Delta^3)$ processors, where $\Delta = \max_{i,j}v_{i,j}$.
\end{theorem}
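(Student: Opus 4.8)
The plan is to recast \textsc{Constrained Payments} as the computation of the pointwise-minimal integer vector satisfying a monotone system of implications, and then to show that this minimal vector is exactly what directed reachability on the \emph{payment rejection graph} returns. First I would observe that, since every $v_{i,j}$ is integral and every agent is paid at most $m\Delta$, it suffices to search over integer payments $q_i \in \{0,1,\dots,m\Delta\}$: the envy-freeness requirement $v_i(X_i)+q_i \ge v_i(X_j)+q_j$ is equivalent to $q_i \ge q_j + d_{ij}$ with $d_{ij} := v_i(X_j)-v_i(X_i) \in \mathbb{Z}$, and all thresholds appearing in $C$ are integral, so any feasible real vector may be replaced by a feasible integer one. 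Both constraint types are then monotone implications: the envy constraint reads ``$q_j \ge s \Rightarrow q_i \ge s+d_{ij}$'' for every level $s$, and a constraint ``if $q_i>x$ then $q_j>y$'' reads ``$q_i \ge x+1 \Rightarrow q_j \ge y+1$''. Because the implications are monotone and the all-zero vector is the smallest nonnegative point, the feasible set is closed under coordinatewise minimum; hence a unique pointwise-minimal feasible vector $q^{\min}$ exists (whenever any feasible vector does), and it suffices to compute it.

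Next I would build the payment rejection graph $H$ with a vertex $(i,p)$ for each agent $i \in [n]$ and each level $p \in \{0,1,\dots,m\Delta+1\}$, reading $(i,p)$ as the assertion ``$q_i \ge p$''. It carries three kinds of directed edges: (i) \emph{monotonicity} edges $(i,p)\to(i,p-1)$, encoding that $q_i\ge p$ implies $q_i\ge p-1$; (ii) \emph{envy} edges $(j,s)\to(i,\min\{s+d_{ij},\,m\Delta+1\})$ for every pair $(i,j)$ and level $s$ (omitted when the target level is negative); and (iii) a \emph{constraint} edge $(i,x+1)\to(j,y+1)$ for each constraint ``if $q_i>x$ then $q_j>y$'' in $C$. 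The vertices $(i,0)$ are the sources, reflecting that $q_i\ge 0$ always holds. I would then compute in parallel the set of vertices reachable from the sources and set $q_i^{\min} := \max\{\,p : (i,p)\text{ is reachable}\,\}$.

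The heart of the argument is that reachability computes precisely the least fixpoint of this implication system. On one hand, when no overflow vertex is reachable, $q^{\min}$ is feasible: the monotonicity and envy edges guarantee $q_i^{\min}\ge q_j^{\min}+d_{ij}$ for every pair (so envy is eliminated), and for each constraint, if $q_i^{\min}>x$ then $(i,x+1)$ is reachable by monotonicity, so the constraint edge forces $(j,y+1)$ reachable and hence $q_j^{\min}>y$. On the other hand, every feasible vector $q$ satisfies each implication, so an induction on the reachability layers shows that $(i,p)$ reachable implies $q_i \ge p$; thus $q^{\min}$ lower-bounds every feasible vector coordinatewise. Consequently, if some overflow vertex $(i,m\Delta+1)$ is reachable then every candidate would need $q_i > m\Delta$, so no feasible payment vector exists; otherwise $q^{\min} \le m\Delta$ componentwise and is itself feasible. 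This also handles the case where $X$ is not even envy-freeable: a positive-weight cycle in the envy graph makes levels climb without bound along the corresponding envy edges until the overflow vertex is reached, correctly reporting infeasibility.

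I expect the main obstacle to be exactly this correctness argument: pinning down that multi-source reachability on $H$ returns the least fixpoint, which requires both that $q^{\min}$ satisfies all constraints (relying crucially on the monotonicity edges so that high forced levels can trigger low-threshold constraints) and that $q^{\min}$ is dominated by every feasible vector (which is what licenses declaring infeasibility on overflow). The complexity is then routine. The graph $H$ has $N = O(nm\Delta)$ vertices, and all its edges, including the $O(n^2 m\Delta)$ individual constraints that arise from meaningful aggregate constraints, can be written into the Boolean adjacency matrix in $O(1)$ parallel time once all $d_{ij}$ are computed by parallel summation. Multi-source directed reachability is obtained by repeatedly squaring this adjacency matrix over the Boolean semiring, taking $O(\log^2 N)=O(\log^2(mn\Delta))$ time with $O(N^3)=O(n^3m^3\Delta^3)$ processors; extracting each $q_i^{\min}$ by a parallel maximum and testing overflow by a parallel OR remain within the same bounds, giving the claimed time and processor counts.
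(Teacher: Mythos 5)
Your proposal is correct and is essentially the paper's own approach: your graph $H$ is the paper's \emph{payment rejection graph} in dual form (vertices assert lower bounds $q_i \ge p$ rather than marking payment levels as rejected, with your explicit monotonicity edges and overflow vertices playing the role of the paper's downward-closed rejection semantics and empty-payment-row check), and you compute the same multi-source reachability via Boolean matrix squaring that the paper implements as a parallel transitive closure, arriving at the identical $O(\log^2(mn\Delta))$ time and $O(n^3m^3\Delta^3)$ processor bounds. If anything, your least-fixpoint argument is slightly more complete than the paper's correctness lemma, which only verifies that the returned vector is envy-eliminating and constraint-satisfying without explicitly proving minimality or that the ``No satisfying vector'' output certifies infeasibility.
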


Before we give the proof, we give an informal explanation of the key ideas and the main algorithm. The full algorithm is given later in this section as~\Cref{alg:rejection-pay} along with the proof of~\Cref{thm: constrained subsidies in nc}.

Note that $C$ is upper-bounded by $O(n^2m^2\Delta^2)$, and hence the size of $C$ does not appear in the bounds of~\Cref{thm: constrained subsidies in nc}. The main challenge with incorporating constraints into the final payment vector is that the previous approach of running all-pairs-shortest-paths on the envy graph does not allow us to isolate specific dollar amounts for which we want to impose a constraint on. To resolve this, we construct a larger, modified graph where each vertex corresponds to an agent \emph{coupled} with a specific payment amount. We call this graph \textit{the payment rejection graph}. Our goal is to select a single vertex for each agent from the payment rejection graph, which will define the final payment vector. An edge in the payment rejection graph will exactly represent the causal relationship defined by a constraint.

Formally, the payment rejection graph is a directed graph $G_p = (V, E)$ with a total of $nm\Delta$ vertices. We arrange the $nm\Delta$ vertices on an $n \times m\Delta$ two-dimensional grid. Vertex $(i,j)$ corresponds to agent $i$ being paid $\Vec{q}_i = j$ dollars. We use the term \textit{rejecting a vertex} $(i,j)$ to denote that $\vec{q}_i = j$ will not be in the final payment vector. We use the term \textit{payment row} for an agent $i$ when referring to the set of vertices $\{(i,j)~|~j \in [m\Delta]\}$. An edge from node $(i,j)$ to $(k,\ell)$, denoted by $(i,j) \rightarrow (k,\ell)$, signifies that if we have rejected all vertices $(i,j')$ for $j' \leq j$, then we also reject all vertices $(k,\ell')$ for $\ell' \leq \ell$. The idea of modeling rejections as edges in a directed graph was first used to find consistent global states in distributed systems~\cite{Garg_Garg2019ParallelPredDetect}. We adapt this approach to find envy-eliminating payments.

We maintain a ``current'' payment vector and initialize it to the all-zero payment vector (i.e we select the vertex $(i,0)$ for each agent $i$). Then, we iteratively increase the agents' payments by one dollar until no envy is present. Although this process seems sequential, we show that we can quickly, in parallel, determine which payment components are not part of \emph{any} envy-eliminating payment vector.
To see this, consider two agents $i$ and $j$ and a current payment vector $\vec{q}$. We can compute the envy $i$ has for $j$ (or, similarly, $j$ has for $i$) subject to these two payments by comparing $i$'s value for $i$'s bundle and payment ($v_i(X_i) + \vec{q}_i$) to that of $j$'s ($v_i(X_j) + \vec{q}_j$). If it is the case that $i$ envies $j$ subject to the payments $\vec{q}_i$ and $\vec{q}_j$, we must increase $i$'s payment by one dollar. So, we will increment $\vec{q}_i$ to $\vec{q}_i + 1$.  Now, if there is any other agent $k$ that envies $i$ subject to the new payment, we know we will have to increase $k$'s payment by one dollar as well. As a result, we can make the following inference: if we pay agent $i$ more than $\vec{q}_i$ dollars, we have to pay agent $k$ more than $\vec{q}_k$ dollars. So, we can place an edge $(i, \vec{q}_i) \rightarrow (k, \vec{q}_k)$. Notice that the meaning of these edges holds transitively (i.e if $(i, \vec{q}_i) \rightarrow (j, \vec{q}_j)$ and $(j, \vec{q}_j) \rightarrow (k, \vec{q}_k)$, then $(i, \vec{q}_i) \rightarrow (k, \vec{q}_k)$). Since this observation does not require us to use any information about other vertices in the graph besides the set $\{(i, \vec{q}_i), (i, \vec{q}_i + 1), (k,\vec{q}_k)\}$, by using a separate processor for each pair of vertices, we can place all edges in the graph simultaneously. Here, we give an example of a payment rejection graph for a specific valuation profile.

\textit{Example of Payment Rejection Graph.} 
Consider the following instance $I$. The value in the $i$'th row and $j$'th column is the value agent $i$ has for item $j$. The envy-freeable allocation $X$ is the following: agent 1 gets item 3, agent two gets item 2, and agent 3 gets item 1. In this example, the vertex $(1,0)$ would be in the set $F$. This is because agent 1 initially envies agent 2 and to ensure that this payment is rejected, we add it to $F$. Finally, in the payment rejection graph, we have only included the most informative edges.

\begin{align*}
I = \begin{bmatrix}
1 & 3 & 2  \\
0 & 1 & 0  \\
2 & 0 & 2  \\
\end{bmatrix}
\end{align*}

\begin{figure}[ht]
\centering
\includegraphics[width = .5\linewidth]{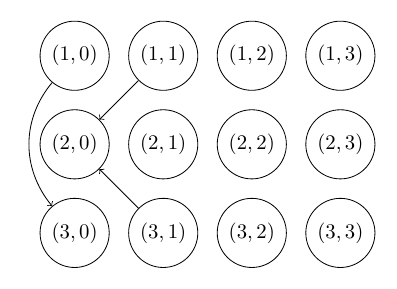} 
\caption{The Payment Rejection Graph for Instance $I$ given above.}
\label{fig: pay-rej-graph}
\end{figure}

For this example, the first envy-eliminating payment vector corresponds to selecting the vertices $\{(1,1), (2,0), (3,1)\}$. If we pay agent 1 and agent 3 one dollar each, we will eliminate envy from the allocation. We have not included all edges for clarity.

The algorithm boils down to computing directed reachability from some specific vertices in the constructed payment rejection graph. We identify which vertices will not be a part of any envy-eliminating payment vector initially, and then follow edges from these vertices. These vertices are of the form $(i,0)$ where there is some other vertex $(j,0)$ where $v_i(X_i) < v_i(X_j)$. Agent $i$ must be paid and so vertex $(i,0)$ \emph{will} be rejected. To find all vertices that are reachable from initially rejected vertices, we take the transitive closure of $G_p$, which can be done efficiently in parallel~\cite{Jaja92bookParallelAlg}. Vertices that are reachable from any initially rejected vertex will be marked as rejected. Then, we find the minimum payment component for each agent using a parallel reduction operator. If there is no minimum component (i.e all vertices along some agent's payment row have been rejected), then we output ``No satisfying vector''. If all agents have a valid payment, we output the envy-eliminating payment vector $\Vec{q}$. Since edges in $G_p$ correspond exactly to a constraint in $C$, all constraints can be added to $G_p$ simultaneously in parallel. Now, the algorithm identifies the first envy-eliminating payment vector that respects these constraints. As a direct result, we get an \textit{NC} algorithm when $\Delta$ is bounded by a polynomial of $n$ and $m$. The formal description of the algorithm is given in~\Cref{alg:rejection-pay}. The proof of~\Cref{thm: constrained subsidies in nc} is an immediate implication of the following two lemmas.

\begin{algorithm}[ht]
\caption{Parallel Payment Rejection Algorithm }\label{alg:rejection-pay}
\begin{algorithmic}[1]
\Require Envy-Freeable Allocation $X$, $v_i(X_j)~\forall i,j \in \agents$, $v_i~\forall i \in \mathcal{N}$
\Ensure Constrained Envy-Eliminating Payment Vector $\Vec{q}$
\State \textbf{var} \textit{G}: Payment Rejection Graph
\ForAll {$(i \in [n], j \in [m\Delta])$ in parallel} \Comment{1. Creating The Payment Rejection Graph}
    \State Create node $(i,j)$ in $G$
\EndFor 
 \ForAll{$((i,j) \in V, (k,l) \in V ~|~ )$ in parallel}
    \If {$v_i(X_i) + j < v_i(X_k) + (l + 1)$}
        \State Add edge $(k,l) \rightarrow (i,j)$ to $G$ \Comment{Extra Constraints can be Added Here}
    \EndIf
\EndFor
\State \textbf{var} \textit{F:} Set of initially envious agents
\ForAll{$(i \in [n], j \in [n])$ in parallel} \Comment{2. Identify initially envious agents, $F$}
    \If {$v_i(X_i) < v_i(X_j)$}
        \State Add $(i,0)$ to $F$
    \EndIf
\EndFor
\State $G_T = \textit{TransitiveClosure}(G = (V, E_T))$ \Comment{3. Transitive Closure on Rejection Edges}
 \ForAll{$v \in F$ in parallel} 
    \ForAll{$v' \in V$ s.t $v \rightarrow v' \in E_T$}
        \State \textit{Mark $v'$ as ``Rejected"} \Comment{4. Rejecting Vertices}
    \EndFor
\EndFor
\ForAll{$(i \in [n])$ in parallel} \Comment{5. Find Minimum Vertex for Each Agent}
    \State $\Vec{q}_i =  argmin_{j \in [m\Delta]} \{(i,j) ~|~ (i,j)~ \textit{not ``Rejected"}\}$
    \If{$\Vec{q}_i = \textit{null}$}
        \State Exit and return ``No satisfying vector''
    \EndIf
\EndFor
\State \textbf{return:} $\Vec{q}$
\end{algorithmic}
\end{algorithm}

\begin{lemma}\label{efficient-subsidy}
Algorithm \ref{alg:rejection-pay} runs in $O(\log^2 (mn\Delta))$ time using $O(n^3m^3\Delta^3)$ processors.
\end{lemma}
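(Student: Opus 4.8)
The plan is to bound the time and processor count of each of the five phases of Algorithm~\ref{alg:rejection-pay} separately, then take the maximum over the time costs and the sum over the processor costs. The payment rejection graph $G_p$ has $nm\Delta$ vertices, so the natural processor budgets are $O(nm\Delta)$ for vertex-indexed loops and $O((nm\Delta)^2) = O(n^2m^2\Delta^2)$ for edge-indexed (pairwise-vertex) loops. First I would handle \textbf{Phase 1 (graph construction)}. Creating the $nm\Delta$ nodes is done by assigning one processor per $(i,j)$ pair, so it takes $O(1)$ time with $O(nm\Delta)$ processors. Adding the edges requires, for each \emph{ordered pair} of vertices $((i,j),(k,l))$, testing the inequality $v_i(X_i)+j < v_i(X_k)+(l+1)$; this test is a constant-time arithmetic comparison (the bundle values $v_i(X_j)$ are given as input), so with one processor per pair of vertices we place all edges simultaneously in $O(1)$ time using $O(n^2m^2\Delta^2)$ processors. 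The extra constraints from $C$ are added in the same step, again one processor per candidate edge, incurring no additional asymptotic cost.

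Next I would dispatch the remaining phases. \textbf{Phase 2 (identifying $F$)} scans all $O(n^2)$ agent pairs, testing $v_i(X_i) < v_i(X_j)$ in $O(1)$ time with $O(n^2)$ processors. \textbf{Phase 3 (transitive closure)} is the dominant step and the one I would treat most carefully: computing the transitive closure of a directed graph on $N = nm\Delta$ vertices reduces to Boolean matrix powering, which is in \textit{NC} and can be done in $O(\log^2 N)$ time. Citing the standard parallel transitive-closure result of~\cite{Jaja92bookParallelAlg}, this takes $O(\log^2(nm\Delta))$ time and $O((nm\Delta)^3) = O(n^3m^3\Delta^3)$ processors (the processor count comes from the repeated-squaring matrix multiplications on $N \times N$ Boolean matrices). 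This is exactly where the claimed $O(n^3m^3\Delta^3)$ processor bound and the $O(\log^2(mn\Delta))$ time bound originate.

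Finally, \textbf{Phase 4 (marking rejected vertices)} amounts to reading off, for each vertex $v'$, whether any $v \in F$ reaches it in the transitive closure $G_T$; since reachability from $F$ is now encoded in $G_T$, each vertex can check in parallel whether it is reachable from some rejected source by taking an OR over the relevant $|F| \le N$ entries, costing $O(\log(nm\Delta))$ time via a reduction operator with $O((nm\Delta)^2)$ processors. \textbf{Phase 5 (minimum vertex per agent)} computes, for each agent $i$, the smallest $j$ with $(i,j)$ unrejected, using a parallel minimum (reduction) over that agent's payment row of length $m\Delta$; this is $O(\log(m\Delta))$ time with $O(nm\Delta)$ processors total, and the null-check is $O(1)$. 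The main obstacle, such as it is, is simply confirming that the transitive-closure phase's resource usage $O(\log^2(nm\Delta))$ time and $O(n^3m^3\Delta^3)$ processors dominates every other phase; once that is verified, the maximum of the time bounds is $O(\log^2(mn\Delta))$ and the sum of the processor bounds is $O(n^3m^3\Delta^3)$, giving the stated result.
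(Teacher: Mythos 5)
Your proposal is correct and follows essentially the same step-by-step accounting as the paper's proof: constant-time parallel construction of the graph and of $F$, transitive closure on $nm\Delta$ vertices in $O(\log^2(nm\Delta))$ time with $O(n^3m^3\Delta^3)$ processors as the dominant step (citing the same standard result), and reduction operators for the final phases. Your Phase~4 bookkeeping ($O(\log(nm\Delta))$ time via an OR-reduction with $O((nm\Delta)^2)$ processors, rather than the paper's claimed $O(1)$ time with $O(nm\Delta)$ processors) is if anything a more careful account in the CREW model, and it changes nothing since the transitive-closure step still dominates.
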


\begin{proof}
Since our algorithm runs in steps, it suffices to show that each step takes a polylogarithmic (in $m$, $n$, and $\Delta$)  amount of time and uses a polynomial (in $m$, $n$, and $\Delta$) number of processors. 

In step one, we create the payment rejection graph. The first for loop takes $O(1)$ time using $nm\Delta$ processors to create each node. The second for loop takes $O(1)$ time using a separate processor for each pair of vertices in $G$. This requires $(nm\Delta)^2$ processors in total.
    
In step two, we compute the set of initially envious agents, $F$. By using a separate processor for each pair $i$ and $j$ of states that are of the form $(i,0)$ and $(j,0)$, we can complete the for loop to compute the set of initially envying agents in $O(1)$ time using $O(n^2)$ processors in total. 

In step three, we take the transitive closure of the edges in $G$. We cite \cite{Jaja92bookParallelAlg} for a detailed discussion on parallel transitive closure techniques. It is well known that taking the transitive closure of a graph on $n$ nodes takes $O(\log^2 n)$ time using $O(n^3)$ processors in the CREW PRAM model. Our graph has $nm\Delta$ nodes, so this transitive closure step takes $O(\log^2 nm\Delta)$ time using $O(n^3m^3\Delta^3)$ processors.

In step four, we mark all vertices that are reachable from $F$ as rejected. This set can have size at most $nm\Delta$. Thus, by using a separate processor for each vertex, we can check if it is reachable from $F$ and mark it as needed in $O(1)$ time using $O(nm\Delta)$ processors.

In step five, we find the minimum viable vertex for each agent. We will use a parallel reduction operator to find the minimum viable vertex for each agent. Note that if at the end of this process, some agent does not have a valid payment as the final minimum unrejected vertex, this means there is no payment that satisfies the imposed set of constraints and also eliminates envy. In this case, the algorithm outputs ``No satisfying vector''. We need $O(nm\Delta)$ processors total and this step will take $O(\log m\Delta)$ time. 

In summary, to find the overall time complexity and processor requirements for our algorithm, we need to single out the step with the largest time and processor costs. Step 3 is the most expensive step in our algorithm. So, our overall runtime is $O(\log^2 nm\Delta)$ time and we require $O(n^3m^3\Delta^3)$ processors.
\end{proof}

\begin{lemma}\label{correct-vec}
Algorithm \ref{alg:rejection-pay} computes a constraint-satisfying and envy-eliminating payment vector.
\end{lemma}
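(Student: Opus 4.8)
The plan is to prove correctness of Algorithm~\ref{alg:rejection-pay} by establishing two directions: that any payment vector $\vec{q}$ it outputs is genuinely envy-eliminating and constraint-satisfying, and that whenever it reports ``No satisfying vector'' no valid vector exists. The central conceptual claim is that an edge $(k,\ell) \rightarrow (i,j)$ in $G_p$ (placed when $v_i(X_i) + j < v_i(X_k) + (\ell+1)$) faithfully encodes the implication ``if agent $k$ is paid strictly more than $\ell$, then agent $i$ must be paid strictly more than $j$.'' First I would verify this local semantics: if we pay agent $k$ at least $\ell+1$ dollars while agent $i$ receives only $j$, then $i$ envies $k$, so $i$'s payment must exceed $j$; hence rejecting all of $(i,0),\dots,(i,j)$ is forced. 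I would then argue that this implication is exactly what reachability in the transitive closure $G_T$ propagates: the initially-rejected vertices in $F$ are those $(i,0)$ where $i$ envies somebody even with zero payments all around, and any vertex reachable from $F$ corresponds to a payment amount that is provably too small to ever appear in a valid vector.

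Next I would handle the two directions separately. For soundness of the output, suppose the algorithm returns $\vec{q}$ with $\vec{q}_i = \operatorname{argmin}\{j : (i,j) \text{ not rejected}\}$ for every agent. I would show $\vec{q}$ eliminates envy by contradiction: if some $i$ still envies $j$ under $\vec{q}$, i.e. $v_i(X_i) + \vec{q}_i < v_i(X_j) + \vec{q}_j$, then (since this is precisely the edge-placement condition with $\ell = \vec{q}_j - 1$) there is an edge $(j,\vec{q}_j - 1) \rightarrow (i,\vec{q}_i)$, and because $(j,\vec{q}_j)$ is the minimum unrejected vertex for $j$, the vertex $(j,\vec{q}_j-1)$ is rejected, so by transitive closure $(i,\vec{q}_i)$ is rejected too, contradicting minimality. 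Constraint satisfaction follows because each constraint in $C$ is added as an edge with the same reachability semantics, so a selected (unrejected) vertex cannot violate any constraint whose antecedent is triggered. For completeness of the ``No satisfying vector'' branch, I would show that every rejected vertex is excluded from \emph{every} envy-eliminating constraint-satisfying vector, by induction on the distance from $F$ in $G_T$: the base case is the definition of $F$, and the inductive step uses the forced-implication semantics of each edge. Thus if some agent's entire payment row is rejected, no valid $\vec{q}$ exists.

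The main obstacle I anticipate is pinning down the induction that justifies rejections precisely, in particular making airtight the claim that rejecting $(i,j)$ really means ``$\vec{q}_i = j$ lies in no valid payment vector'' rather than merely ``$\vec{q}_i = j$ is dominated.'' The subtlety is that the algorithm commits to the \emph{pointwise minimum} unrejected payment for each agent independently, so I must argue that these independently-chosen minima are mutually consistent, i.e. that selecting each agent's minimum surviving vertex yields a vector that simultaneously satisfies all the implications rather than triggering a fresh envy or constraint violation among the chosen coordinates. The soundness argument above is what closes this gap, but care is needed to confirm that the edge condition $v_i(X_i) + j < v_i(X_k) + (\ell+1)$ is equivalent to strict envy at the relevant payment levels and that the off-by-one in $\ell+1$ correctly captures ``strictly more than $\ell$.'' Once the edge semantics and the rejection-invariant are stated cleanly, both directions reduce to short arguments about reachability, and the bound $m\Delta$ on payments (justified earlier) guarantees the grid is large enough that a genuinely valid vector is never spuriously rejected for running off the top of a payment row.
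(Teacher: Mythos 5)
Your argument for the validity of the output is essentially the paper's proof, step for step: assuming residual envy from $i$ to $j$ under the returned $\vec{q}$, you note that $v_i(X_i)+\vec{q}_i < v_i(X_j)+\vec{q}_j$ is exactly the edge-placement condition with $\ell = \vec{q}_j-1$, so the edge $(j,\vec{q}_j-1)\rightarrow(i,\vec{q}_i)$ exists; since $(j,\vec{q}_j-1)$ is rejected (as $\vec{q}_j$ is the row minimum), the transitive closure rejects $(i,\vec{q}_i)$, contradicting minimality, and the constraint half is handled identically. Your additional completeness direction --- an induction on distance from $F$ showing every rejected vertex lies in \emph{no} valid vector, which justifies the ``No satisfying vector'' branch --- goes beyond what the paper proves for this lemma (the paper only asserts that branch in passing, in the proof of Lemma~\ref{efficient-subsidy}), and is a worthwhile strengthening.

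However, there is a genuine gap that you explicitly flagged (``the off-by-one in $\ell+1$'') but did not close, and which the paper's proof shares: when $\vec{q}_j = 0$, the vertex $(j,\vec{q}_j-1)$ does not exist and the contradiction evaporates. This case is not vacuous for Algorithm~\ref{alg:rejection-pay} as written, because $F$ contains only vertices of the form $(i,0)$, whereas an agent whose initial envy gap exceeds one dollar needs all of $(i,0),\dots,(i,\max_k v_i(X_k)-v_i(X_i)-1)$ rejected at the outset. Concretely, take two agents with $v_1(X_1)=0$, $v_1(X_2)=2$, $v_2(X_1)=0$, $v_2(X_2)=5$ (an envy-freeable allocation): then $F=\{(1,0)\}$, the only out-edge of $(1,0)$ is its self-loop (an edge $(1,0)\rightarrow(i,j)$ requires $v_i(X_i)+j < v_i(X_1)+1$), and although the edge $(2,0)\rightarrow(1,1)$ is present, $(2,0)$ is never rejected; the algorithm therefore outputs $\vec{q}=(1,0)$, under which agent $1$ still envies agent $2$. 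The repair is to initialize the rejected set with every vertex $(i,j)$ satisfying $v_i(X_i)+j < \max_k v_i(X_k)$ --- equivalently, to add a virtual, always-rejected layer $\ell=-1$ to each payment row --- after which both your minimality contradiction and your distance-from-$F$ induction go through uniformly for all $\vec{q}_j \geq 0$. Since you singled out precisely this off-by-one as the point needing care, your writeup should either prove the $\vec{q}_j=0$ case or state the corrected initialization; as sketched, the soundness argument does not cover it.
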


\begin{proof}
Let $\Vec{q}$ be the payment vector output of Algorithm~\ref{alg:rejection-pay}. Suppose $\Vec{q}$ is not envy-eliminating. $\Vec{q}$ is a set of vertices chosen from the payment rejection graph where we select one vertex from each row. Thus, $\Vec{q} = \{(1,\Vec{q}_1), (2,\Vec{q}_2), \dots, (n,\Vec{q}_n) \}$. If $\Vec{q}$ is not envy-eliminating, then there exist some $i, j \in \agents$ where $i \neq j$ and: $v_i(X_i) + \Vec{q}_i < v_i(X_j) + \Vec{q}_j$. Since, we have $\Vec{q}_j$ as $j$'s payment, we know that we rejected the vertex $(j, \Vec{q}_j-1)$. However, since we have that $v_i(X_i) + \vec{q}_i < v_i(X_j) + \vec{q}_j$, it must be that there is an edge $(j, \Vec{q}_j-1) \rightarrow (i, \Vec{q}_i)$ as this is exactly the requirement for there to be an edge between two vertices in the payment rejection graph. Since $(j, \vec{q}_j-1)$ was rejected and there is an edge $(j, \vec{q}_j-1) \rightarrow (i, \vec{q}_i)$, it must be that $(i, \vec{q}_i)$ was rejected as well.

Suppose $\Vec{q}$ is not constraint-satisfying. This means there is some $(i,\Vec{q}_i) \in \Vec{q}$ that violates a constraint. User-added constraints are in the form of edges from one vertex to another in the payment rejection graph. Suppose there was a constraint of the form $(j, \Vec{q}_j) \rightarrow (i, \Vec{q}_i)$ that is not satisfied. This means that $(j, \Vec{q}_j)$ was rejected and yet $(i, \Vec{q}_i)$ was not. However, since we take the transitive closure of all edges in the payment rejection graph and $(j, \Vec{q}_j)$ was rejected we know that $(i, \Vec{q}_i)$ is also in the neighborhood of some vertex in $F$ and will also be rejected. As a result, we know that $(i, \Vec{q}_i)$ cannot be part of the output payment vector.
\end{proof}

\begin{corollary}
The problem of finding an envy-eliminating and constraint-satisfying payment vector is in \textit{NC} if $\Delta = \max_{i,j}v_{i,j}$ is polynomial in $n$ and $m$.
\end{corollary}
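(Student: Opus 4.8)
The plan is to obtain the corollary directly from Theorem~\ref{thm: constrained subsidies in nc} by substituting the hypothesis $\Delta = \mathrm{poly}(n,m)$ into the stated time and processor bounds, and then verifying that the resulting quantities meet the definition of \textit{NC}. First I would fix the assumption formally: suppose there is a constant $c$ with $\Delta \le (mn)^c$. Recall that \textsc{Constrained Payments} (taking the allocation $X$ and the constraint set $C$ as input) is precisely the problem of computing an envy-eliminating, constraint-satisfying payment vector, so it suffices to check that the algorithm of Theorem~\ref{thm: constrained subsidies in nc} (namely Algorithm~\ref{alg:rejection-pay}) runs within the \textit{NC} resource budget under this assumption.

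For the time bound, I would note that $\log^2(mn\Delta) \le \log^2\!\big(mn\cdot(mn)^c\big) = (c+1)^2 \log^2(mn) = O(\log^2(mn))$, which is polylogarithmic in the input size (since the input, being a profile of integers bounded by $m\Delta$, has size polynomial in $m$ and $n$ when $\Delta$ is polynomial). For the processor count, $n^3 m^3 \Delta^3 \le n^3 m^3 (mn)^{3c} = (mn)^{3+3c}$, which is polynomial in $m$ and $n$, hence polynomial in the input size. Together these two estimates place the problem in \textit{NC}$^2$.

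The one point worth checking, rather than a genuine obstacle, is that the algorithm behind Theorem~\ref{thm: constrained subsidies in nc} is \emph{deterministic}: it only uses node and edge creation, parallel transitive closure, reachability marking, and parallel reduction operators, none of which invoke randomization. This is what lets us conclude membership in \textit{NC} rather than merely \textit{RNC}. I would close by remarking that combining this with Theorem~\ref{thm: envy-freeable easy}, which computes an envy-freeable allocation in \textit{NC}, shows that the full pipeline of producing both an allocation and a constrained envy-eliminating payment vector stays in \textit{NC} under the same polynomial bound on $\Delta$.
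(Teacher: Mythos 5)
Your proposal is correct and matches the paper's (implicit) argument: the corollary is stated in the paper without a separate proof precisely because it follows by direct substitution of $\Delta = \mathrm{poly}(n,m)$ into the time and processor bounds of Theorem~\ref{thm: constrained subsidies in nc}, exactly as you do. Your added observations---that Algorithm~\ref{alg:rejection-pay} is deterministic (so the conclusion is \textit{NC}, not merely \textit{RNC}) and that it composes with Theorem~\ref{thm: envy-freeable easy}---are accurate and consistent with the paper.
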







\section{Conclusion}

Our results show that many problems in fair division admit efficient parallel solutions. Our main contributions are efficient parallel fair division algorithms for allocating indivisible goods to restricted additive agents, finding constrained payment vectors along with envy-freeable allocations under the subsidy model, and finding fair allocations for up to three agents. Our hardness result shows that the traditional Round-Robin EF1 algorithm cannot be directly translated to the parallel setting. We leave open many interesting research directions. Is the problem of finding any EF1 allocation \textit{CC}-Hard? Are any problems in fair division \textit{P}-Complete~\cite{Ruzzo95LimitsParallel}? Can we give \emph{deterministic} parallel algorithms for restricted additive fair division?

\bibliographystyle{alpha}
\newcommand{\etalchar}[1]{$^{#1}$}

 

\end{document}